\newcommand*\circled[1]{\tikz[baseline=(char.base)]{
            \node[shape=circle,draw,inner sep=2pt] (char) {#1};}}
\newtheorem{theorem}{Theorem}
\newtheorem{lemma}[theorem]{Lemma}
\theoremstyle{definition}
\newtheorem{example}{Example}
\newcommand{\pp}{\mathbf{p}}
\newcommand{\s}{\mathbf{s}}
\newcommand{\qq}{\mathbf{q}}
\newcommand{\xx}{\mathbf{x}}
\newcommand{\UU}{\mathcal{U}}
\newcommand{\CC}{\mathcal{C}}
\newcommand{\SW}{\text{SW}}
\newcommand{\R}{\mathbb{R}}
\newcommand{\yy}{\mathbf{y}}
\newcommand{\bigO}{{\cal O}}
\newcommand{\one}[1]{\mathbbm{1}\{#1\}}
\title{{\bf Efficiency and complexity of price competition among single-product vendors}\thanks{A preliminary version of this paper appeared in {\em Proceedings of the 24th International Joint Conference on Artificial Intelligence (IJCAI)}, pages 25--31, 2015. This work was partially supported by Caratheodory research grant E.114 from the University of Patras and by a PhD scholarship from the Onassis Foundation.}}
\author{%
	\makebox[.25\linewidth]{Ioannis Caragiannis}
	\and \makebox[.25\linewidth]{Xenophon Chatzigeorgiou}
	\and \makebox[.25\linewidth]{Panagiotis Kanellopoulos}
  	\and \makebox[.25\linewidth]{George A. Krimpas}
  	\and \makebox[.25\linewidth]{Nikos Protopapas}
  	\and \makebox[.25\linewidth]{Alexandros A. Voudouris}
}
\date{University of Patras \& CTI ``Diophantus''}
\begin{document}
\maketitle

\begin{abstract}
Motivated by recent progress on pricing in the AI literature, we study marketplaces that contain multiple vendors offering identical or similar products and unit-demand buyers with different valuations on these vendors. The objective of each vendor is to set the price of its product to a fixed value so that its profit is maximized. The profit depends on the vendor's price itself and the total volume of buyers that find the particular price more attractive than the price of the vendor's competitors. We model the behaviour of buyers and vendors as a two-stage full-information game and study a series of questions related to the existence, efficiency (price of anarchy) and computational complexity of equilibria in this game. To overcome situations where equilibria do not exist or exist but are highly inefficient, we consider the scenario where some of the vendors are subsidized in order to keep prices low and buyers highly satisfied.
\end{abstract}

\section{Introduction}
We focus on marketplaces that contain multiple vendors offering a single product and unit-demand buyers. For example, we may think of software development companies, each offering an operating system. Each potential user is interested in buying one operating system from some software company and has preferences over the different options available in the market. Her final choice depends not only on her preferences but also on the prices of the available products; eventually, each user will choose the product with the best value for money, or will simply abstain from purchasing a product if the available options are not satisfactory for her. In turn, vendors are aware of this buyer behaviour and aim to set the price of their product to a value that will maximize their profit. In particular, the dilemma a vendor faces is to select between a very small price that will guarantee a large market share or a huge price that will be attractive only to a few buyers. Of course, there are usually many options in between, and coming up with a pricing that will maximize profits in such an environment is challenging.

We model the above scenario as a two-stage full-information game (with both the vendors and the buyers as players) which we call a {\em price competition game}. In the first stage, each vendor selects the price of its product among a set of viable price values (i.e., the price values that are above a fixed production cost per unit of product). Buyers have unit demands and (possibly different) valuations for vendors. Together with the valuations of buyers, a vector of prices (with one price per vendor) determines in a second stage the most attractive vendor for each buyer. Each vendor has full information about the valuations of buyers and can predict their behaviour. The objective of each vendor is to set its price so that its profit (i.e., volume of buyers it attracts times the difference of price and production cost) is maximized given the prices of the other vendors.

\paragraph{A high-level overview of our contribution} We present a list of results for these price competition games. Our starting point is the observation that equilibria (i.e., buyers-to-vendors assignments and corresponding prices so that all vendors and all buyers are satisfied) are guaranteed to exist only when all buyers have the same valuations; price competition games with buyers belonging to at least two different types (with respect to their valuations) may not have equilibria. Even when equilibria exist, they may be highly suboptimal. We use the notion of the price of anarchy (introduced by Koutsoupias and Papadimitriou~\cite{KP09}; see also Papadimitriou~\cite{P01}) to quantify how low the social welfare of equilibria can be compared to the optimal one. The social welfare is essentially the sum of buyer utilities and vendor profits. We also formulate several variations of equilibrium computation problems and present complexity results about them. These results range from polynomial-time algorithms (e.g., for the problem of determining prices that form an equilibrium together with a given buyers-to-vendors assignment) to hardness results (e.g., for the general problem of deciding whether a given price competition game admits an equilibrium). Motivated by the negative results on the existence and quality of equilibria, we investigate whether efficient buyers-to-vendors assignments can be enforced as equilibria by {\em subsidizing} the vendors. Our main contribution here is conceptual: subsidies can indeed overcome the drawbacks of price competition. Our technical contributions include tight bounds on the amount of subsidies sufficient to enforce a social welfare-maximizing buyers-to-vendors assignment as an equilibrium, as well as inapproximability results for the problem of minimizing the amount of subsidies sufficient to do so.

\paragraph{Related work} AI literature has long considered settings where vendors and buyers interact in electronic marketplaces (see e.g.~\cite{SK05}). Our model is very similar to (and actually inspired from) the one considered by Meir et al.~\cite{MLTB14} who focus on the impact of discounts (i.e., prices that are decreasing functions of demand) on vendors' profit compared to fixed prices. After observing that price discounts have no impact at all in the full-information setting, they mostly focus on a Bayesian setting with uncertainty on buyers' valuations. In contrast, we restrict our attention to the full-information model and consider only fixed prices. As we will see, this simple setting is very rich from the computational point of view. With the work of Meir et al.~\cite{MLTB14} as an exception, our assumptions differ significantly from most of the literature on price competition. For example, unlike early models such as the ones proposed by Cournot and Bertrand (see the book of Mas-Colell et al.~\cite{MWG95}) as well as very recent refinements (e.g., the work of Babaioff et al.~\cite{BLN13}), we assume that all vendors have unlimited supply. Also, contrary to other recent models that consider buyers with combinatorial valuations for bundles of different products as in the papers of Guruswami et al.~\cite{GHK+05}, Chawla and Roughgarden~\cite{CR08}, Babaioff et al.~\cite{BNPL14}, Lev et al.~\cite{LOBR15} and more, we specifically assume that each buyer is interested in obtaining just a single product. In this way, the decision each buyer faces is rather trivial and this allows us to concentrate on the competition between the vendors. The competition of single-product vendors with unlimited supply differentiates our model also from the concept of Walrasian equilibrium, where the emphasis is on assigning prices to different items so that all buyers are satisfied, see e.g., \cite{GS99,GHK+05}. In addition, unlike variations of the models studied in these papers (such as in \cite{KC82}), vendors do not use price discrimination in our case. On a more technical level, we implicitly assume an infinite number of buyers and use the notion of buyer types to distinguish between sets of buyers; this is a less important difference of our model to previous work on pricing. Very recent work, following the conference version of this manuscript, includes Anshelevich and Sekar \cite{AS15} that introduces seller costs to Bertrand competition and allows for arbitrary supply. They study a two-stage full-information pricing game, where vendors sell edges on a directed graph and buyers wish to buy paths that connect a source to a sink node, and characterize the structure and properties of equilibria of that game. In addition, Borodin et al. \cite{BLS16} consider a pricing game with a single budget-constrained buyer and strategic vendors and characterize equilibria for a large class of valuation functions.

The use of subsidies in price competition games suggests yet another way of introducing external monetary incentives in games; such incentives (or disincentives) have been considered in many different contexts. Much of the work in {\em mechanism design} uses such incentives to motivate players to act truthfully (see \cite{N07} for an introduction to the field). The (apparently non-exhaustive) list also includes their use in {\em cooperative game theory} in order to encourage coalitions of players to reach stability~\cite{BEM+09} and as a means to stabilize normal form games \cite{MT04}. As in \cite{ACFK12} and \cite{BLNO10}, the use of monetary incentives in the current paper aims to improve efficiency. Monetary disincentives like {\em taxes} have been used to improve the efficiency of network routing (see \cite{CDR06} and the references therein for a relatively recent approach that extends early developments in the literature of the {\em economics of transportation}) and, in the recent AI literature, in boolean games \cite{WEKL13}.

\paragraph{Roadmap} The rest of the paper is structured as follows. We begin with preliminary definitions in Section \ref{sec:prelim}. Then, we consider questions about the existence of equilibria in price competition games and their price of anarchy in Section \ref{sec:quality}. We formulate computational problems for equilibria and study related complexity questions in Section \ref{sec:complexity}. We investigate the potential of subsidizing specific vendor prices in Section \ref{sec:subsidies} and, finally, we conclude with open questions in Section~\ref{sec:open}.

\section{Preliminaries}\label{sec:prelim}
Our setting includes a set $M$ containing $m$ vendors targeting a large population of buyers. The buyers are classified into $n$ buyer types from a set $N$. We denote by $\mu_i$ the volume of buyer type $i$; for example, $\mu_i=3$ can have the meaning that there are $3$ million buyers of type $i$. Each of these buyers has a non-negative valuation $v_{ij}$ for vendor $j$ (representing the satisfaction each buyer of type $i$ has when buying the product of vendor $j$). Essentially, the buyers are classified into types depending on the valuations they have for the vendors.

Each vendor $j$ has a non-negative cost $c_j$ per unit of product; we refer to $c_j$ as the {\em production cost} of vendor $j$. The objective of each vendor $j$ is to determine a price $p_j$ for its product; naturally, $p_j\geq c_j$ so that the vendor always has non-negative profit. A {\em price vector} $\pp=(p_1,...,p_m)$ (containing a price per vendor) defines a {\em demand set} $D_i(\pp)$ which, for each buyer type $i$, denotes the set of vendors that maximize the utility of the buyers of type $i$, i.e., $$D_i(\pp)=\arg\max_{j\in M}\{v_{ij} - p_j\}.$$

Intuitively, the demand set for buyers of type $i$ consists of the most attractive vendors for these buyers. We assume that the operator $\arg\max_{j\in M}$ returns (a set containing) an artificial vendor which represents an ``abstain'' option that a buyer has when its maximum utility (over all vendors) is non-positive. With a small abuse of notation, we introduce an extra vendor into $M$ in order to represent this abstain option for buyers; this vendor has production cost of $0$, it always has a price of $0$, and the valuations of buyers for it are $0$.

A {\em buyers-to-vendors assignment} (or, simply, an {\em assignment}) is represented by an $n\times (m+1)$ matrix $\xx$ which denotes how the volume of the buyers of each type is split among different vendors. In particular, the entry $x_{ij}$ denotes the volume of buyers of type $i$ that are assigned to vendor $j$ and it must be $\sum_{j\in M}{x_{ij}}=\mu_i$ for every buyer type $i$. An assignment $\xx$ is {\em consistent} to a price vector $\pp$ if $x_{ij}>0$ implies $j\in D_i(\pp)$. We can interpret such an assignment as maximizing the utility of buyers given the price vector $\pp$. We will denote by $$t_i(\xx,\pp)=\sum_{j\in M}{x_{ij}(v_{ij}-p_j)}$$ the total utility of buyers of type $i$ given a price vector $\pp$ and a consistent assignment $\xx$.

We study the game induced among vendors and buyers and use the term {\em price competition game} to refer to it. This can be thought of as a two stage game. At a first stage, the strategy of each vendor is its price. At a second stage, the buyers respond to these prices as described above. The utility of vendor $j$, when the vendors use a price vector $\pp$ and the buyers are assigned to vendors according to an assignment $\xx$ that is consistent to $\pp$, is defined as $$u_j(\xx,\pp)=(p_j-c_j)\sum_{i\in N}{x_{ij}}.$$ Vendors are utility-maximizers. A price vector $\pp$ and a consistent assignment $\xx$ form a {\em (pure Nash) equilibrium} when for every vendor $j$, the price $p_j$ maximizes the utility $u_j(\yy,(p'_j,\pp_{-j}))$ among all prices $p'_j\geq c_j$ and all assignments $\yy$ that are consistent to $(p'_j,\pp_{-j})$. Here, the notation $(p'_j,\pp_{-j})$ is used to represent the price vector where all vendors besides $j$ use the prices in $\pp$ and vendor $j$ has deviated to price $p'_j$.

The {\em social welfare} of an assignment $\xx$ is defined as $$\SW(\xx) = \sum_{i\in N}{\sum_{j\in M}{x_{ij}(v_{ij}-c_j)}}.$$
This definition does not require the assignment $\xx$ to be consistent to a price vector and can be used to define the {\em optimal social welfare} as
$$\SW^* = \sum_{i\in N}{\mu_i\max_{j\in M}\{v_{ij}-c_j\}}.$$
When the assignment $\xx$ is consistent to a price vector $\pp$, the social welfare can be equivalently seen as the total utility of vendors and buyers since
\begin{eqnarray*}
\SW(\xx) &=& \sum_{i\in N}{\sum_{j\in M}{x_{ij}(v_{ij}-c_j)}}\\
&=& \sum_{i\in N}{\sum_{j\in M}{x_{ij}(v_{ij}-p_j)}} + \sum_{j\in M}{(p_j-c_j) \sum_{i\in N}{x_{ij}}}\\
&=& \sum_{i\in N}{t_i(\xx,\pp)} + \sum_{j\in M}{u_j(\xx,\pp)}.
\end{eqnarray*}

The {\em price of anarchy} of a price competition game is the ratio of the optimal social welfare over the minimum social welfare among all equilibria. Of course, this is well-defined only for price competition games that do have equilibria.

\begin{example}
Consider a price competition game as shown in Table \ref{tab:example}.
\begin{table}[htbp]
\centerline{
\begin{tabular}{l|c c c}
  & vn $1$& vn $2$& vn $3$\\[-0.5em]
  & $c_1=0$& $c_2=1$& $c_3=2$\\\hline
  bt $A$: $\mu_A=1$ & $2$ & $1$ & $0$ \\
  bt $B$: $\mu_B=1/2$ & $1$ & $2$ & $1$ \\
  bt $C$: $\mu_C=2/3$ & $1$ & $0$ & $2$
\end{tabular}}
  \caption{An example with three buyer types and three vendors. The first row shows the vendors (vn) and their production cost and the leftmost column represents buyer types (bt) and their volume. Unlabeled cells contain valuations $v_{ij}$ for $i\in \{A,B,C\}$ and $j\in \{1,2,3\}$.}\label{tab:example}
  \end{table}

Note that the buyers-to-vendors assignment $\xx$ where $x_{A1} = \mu_A$, $x_{B2} = \mu_B$, $x_{C3} = \mu_C$, while all other entries in $\xx$ are equal to $0$, is consistent to the price vector $\pp = (2,2,2)$ and maximizes the social welfare. In particular, vendor  $1$ obtains utility $u_1(\xx,\pp)=\mu_A(p_1-c_1) = 2$, vendor $2$ obtains utility $u_2(\xx,\pp) = \mu_B(p_2-c_2) = 1/2$, while vendor $3$ has $u_3(\xx,\pp) = \mu_C(p_3-c_3) = 0$; since all buyers obtain utility equal to $0$, it holds that $\SW(\xx) = 5/2$. We remark that $(\xx,\pp)$ is not a pure Nash equilibrium, as vendor $1$ can lower its price to $p_1' = 1-\epsilon$, where $\epsilon>0$ is arbitrarily small, and attract all buyers; the obtained utility under the new assignment $\yy$ is $u_1(\yy,(p_1',\pp_{-1})) = 13/6\cdot(1-\epsilon)$, i.e., $u_1(\yy,(p_1',\pp_{-1})) > u_1(\xx,\pp)$.
\end{example}

In the following, we sometimes use the abbreviation $x^+$ instead of $\max\{0,x\}$ and write $[\ell]$ instead of the set $\{1, 2, ..., \ell\}$ for an integer $\ell\geq 1$.

\section{Existence and quality of equilibria}\label{sec:quality}
As a warm up, we present a negative result that reveals a strong relation of the price of anarchy of price competition games to the number of buyer types.
\begin{lemma}\label{lem:lower}
There are one-vendor price competition games with price of anarchy that is arbitrarily close to $n$.
\end{lemma}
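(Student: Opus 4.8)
With a single vendor $j$ (plus the artificial abstain vendor), the game has buyer types $i \in [n]$ with valuations $v_{ij}$ and a production cost $c_j$, which I may as well normalize to $c_j = 0$. The vendor's only decision is the price $p = p_j \geq 0$. A buyer of type $i$ purchases from the vendor exactly when $v_{ij} - p \geq 0$, i.e. when $p \leq v_{ij}$ (ties broken in the vendor's favor, which is the natural choice for a consistent assignment that the vendor can induce). So at price $p$ the vendor's profit is $p \cdot \sum_{i : v_{ij} \geq p} \mu_i$, and the optimal social welfare is $\SW^* = \sum_i \mu_i v_{ij}$, achieved by serving everyone.

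\textbf{The construction I would use is a geometric ladder of valuations.} Set $n$ types with distinct valuations $v_{1j} > v_{2j} > \dots > v_{nj}$ arranged so that the revenue-maximizing price is the \emph{highest} valuation $v_{1j}$, serving only the top type, yet the social-welfare-maximizing outcome serves all $n$ types. Concretely, I would take $v_{ij} = 2^{-(i-1)}$ and volumes $\mu_i = 2^{i-1}$ for $i \in [n]$. Then for any price $p \in (v_{(k+1)j}, v_{kj}]$, the served volume is $\sum_{i=1}^{k} \mu_i = 2^k - 1$ and the best profit at such a price is obtained at $p = v_{kj} = 2^{-(k-1)}$, giving profit $2^{-(k-1)}(2^k - 1) = 2 - 2^{-(k-1)}$, which is strictly increasing in $k$. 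Hence the \emph{unique} best response of the vendor is the largest price $p = v_{nj}$, serving all types, with profit approaching $2$. This makes the equilibrium coincide with the optimum, which is the wrong direction — so I must instead arrange the profit to be \emph{maximized} by serving few buyers while welfare is maximized by serving many.

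\textbf{The fix is to flip the monotonicity: make high prices most profitable but low-welfare.} I would instead choose valuations and volumes so that $p \cdot (\text{served volume at } p)$ is maximized at the top valuation (few buyers, small total welfare contribution) while $\SW^* = \sum_i \mu_i v_{ij}$ is large because many low-value, high-volume types exist. For instance, take $v_{1j} = 1$ with tiny volume $\mu_1 = \varepsilon$, and then $n-1$ further types with valuations decreasing geometrically but volumes chosen so that each single price level yields essentially the same revenue $R$, while the \emph{welfare} of the low types sums up to roughly $n \cdot R$. A clean instantiation: let each type $i$ have $v_{ij} = 2^{-(i-1)}$ and $\mu_i = 2^{i-1}$, so that at price $p = v_{ij}$ the revenue from the top $i$ types is $v_{ij}\sum_{k \le i}\mu_k \approx 1$ for every $i$ (all price levels give revenue $\approx 1$, so any single type served alone is an equilibrium by appropriate tie-breaking). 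Then I would identify the \emph{worst} equilibrium — the one where the vendor sets $p = v_{1j} = 1$ and serves only type $1$, giving equilibrium social welfare $\mu_1 v_{1j} + (\text{buyer utility}) = 1$ — against $\SW^* = \sum_i \mu_i v_{ij} = \sum_{i=1}^n 2^{i-1}\cdot 2^{-(i-1)} = n$. This yields price of anarchy $\to n$.

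\textbf{The main obstacle I anticipate} is verifying that the bad outcome is genuinely an \emph{equilibrium} in the sense defined: since the vendor controls only its own price and there are no competitors, I must check that no deviation to a lower price strictly increases \emph{its own} profit (the equilibrium condition constrains the vendor, not the buyers). With volumes and valuations balanced so that every price level yields the \emph{same} vendor profit, the high-price outcome is a legitimate best response, and the low social welfare there drives the ratio. The delicate points are (i) choosing the tie-breaking / consistent assignment so that the worst such equilibrium is realized, and (ii) handling the strict-versus-weak inequality in the served set $\{i : v_{ij} \geq p\}$ and the resulting supremum/``arbitrarily close'' phrasing, since the ratio approaches but may not exactly equal $n$. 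I would therefore state the bound as a supremum over the family of instances parameterized by $n$, confirming the ratio tends to $n$ as the number of types grows.
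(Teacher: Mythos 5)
Your high-level plan---a single vendor, a ladder of types where high valuation pairs with low volume, and revenues balanced across price levels so that a high-price, low-welfare outcome is a best response---is exactly the idea behind the paper's proof. But your concrete construction does not work, and the failure is visible inside your own write-up. In your second paragraph you correctly compute that with $v_{ij}=2^{-(i-1)}$ and $\mu_i=2^{i-1}$ the profit at price $v_{kj}$ is $2-2^{-(k-1)}$, strictly increasing in $k$, so the vendor's unique best response is the \emph{lowest} price, serving everyone. Then, in your ``fix'', you re-use exactly these numbers and assert that at price $p=v_{ij}$ the revenue from the top $i$ types is $\approx 1$ for every $i$. That assertion contradicts your own earlier computation: the revenues are $1,\, 1.5,\, 1.75,\, \ldots,\, 2-2^{-(n-1)}$, not all close to $1$. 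Consequently the outcome you declare to be the worst equilibrium---price $1$, only type $1$ served, profit $1$---is not an equilibrium at all: deviating to price $2^{-(n-1)}$ yields profit $2-2^{-(n-1)}>1$ for every $n\geq 2$. In your instance the only equilibrium serves all buyers, and the price of anarchy is $1$.

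The repair needs three ingredients, all present in the paper's proof. First, to balance revenues you cannot choose valuations and volumes as two independent geometric sequences; the valuations must be reciprocals of the served volumes, $v_i=\left(\sum_{k\le i}\mu_k\right)^{-1}$ in your indexing (tail sums in the paper's), so that every price level yields revenue exactly $1$. Second, balance alone does not give ratio $n$: with $\mu_i=2^{i-1}$ this forces $\mu_i v_i = 2^{i-1}/(2^i-1)\approx 1/2$, hence $\SW^*\approx n/2$ and a gap of only $n/2$. You need each type's volume to dominate the cumulative volume of all higher-valuation types, e.g.\ volumes $\alpha^{i-1}$ with $\alpha\to 0$ (and valuations equal to inverse tail sums), which gives $\mu_i v_i\geq 1-\alpha$ and $\SW^*\geq (1-\alpha)n$. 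Third, the tie-breaking obstacle you flag at the end is real under exact balance, and the clean way out is the paper's: perturb the extreme type's valuation upward (to $(1+\alpha)/\mu_n$) so that the low-welfare outcome yields profit $1+\alpha>1$ and is the \emph{strict}, unique best response; no favorable tie-breaking is needed, and the ratio $(1-\alpha)n/(1+\alpha)$ tends to $n$.
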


\begin{proof}
Consider a price competition game with $n$ buyer types and one vendor with a production cost of $0$. Let $\alpha\in (0,1)$. The volume of a buyer type $i \in N$ is $\mu_i = \alpha^{i-1}$. The valuation of buyers of type $i$ is $v_i = \left(\sum_{j=i}^n{\mu_j}\right)^{-1}$ for $i\in [n-1]$ and $v_n=(1+\alpha)/\mu_n$. By setting its price up to $v_i$ for $i\in [n-1]$, the vendor can only get a utility of at most $v_i \sum_{j=i}^n{\mu_j}$ by attracting the buyers of type $i, i+1, ..., n$. By the definition of $v_i$, this utility is at most $1$. This is smaller than the utility the vendor would have by selecting a price of $v_n$ and attracting only the buyers of type $n$ (the remaining buyers simply abstain). This is an equilibrium in which the utility of the vendor (as well as the social welfare) is $1+\alpha$. In contrast, the social welfare of the assignment in which all buyers are assigned to the vendor is $\sum_{i=1}^n{\mu_i v_i} \geq (1-\alpha)n$; the inequality holds by the definition of $v_i$ and since
$$\sum_{j=i}^n{\mu_j} = \sum_{j=i}^n{\alpha^{j-1}} = \alpha^{i-1}\sum_{j=0}^{n-i}\alpha^j \leq \mu_i\sum_{j=0}^\infty{\alpha^j}=\mu_i(1-\alpha)^{-1}.$$
The price of anarchy is then at least $(1-\alpha)n/(1+\alpha)$ which can become arbitrarily close to $n$ by selecting $\alpha$ appropriately.
\end{proof}

Interestingly, the price of anarchy does not depend on any other quantity and the lower bound of Lemma \ref{lem:lower} is tight.
\begin{theorem}\label{thm:poa-upper}
The price of anarchy of any price competition game with $n$ buyer types is at most $n$.
\end{theorem}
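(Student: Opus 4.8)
The plan is to fix an arbitrary equilibrium $(\xx,\pp)$ and bound $\SW^*$ by $n\cdot\SW(\xx)$ directly, exploiting the decomposition $\SW(\xx)=\sum_{i\in N}t_i(\xx,\pp)+\sum_{j\in M}u_j(\xx,\pp)$ established in the preliminaries. Write $U_i=\max_{j\in M}\{v_{ij}-p_j\}$ for the common per-buyer utility of type $i$ at the equilibrium; since $\xx$ is consistent to $\pp$, every buyer of type $i$ attains exactly this value, so $t_i(\xx,\pp)=\mu_i U_i$, and moreover $U_i\ge 0$ because the abstain vendor always guarantees utility $0$.

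The heart of the argument is a single-type deviation bound: for every vendor $j$ and every type $i$, I would let vendor $j$ deviate to a price just below $v_{ij}-U_i$ while the other prices stay fixed. Such a price makes $j$ strictly the most attractive option for type $i$ (its utility becomes $U_i+\epsilon$, which exceeds the value $U_i$ of any other option, including abstaining), so every buyer of type $i$ must be assigned to $j$ in any consistent assignment. Letting the deviation price tend to $v_{ij}-U_i$ from below, and noting that the conclusion is vacuous when $v_{ij}-U_i\le c_j$ since $u_j(\xx,\pp)\ge 0$ always, the equilibrium condition yields
$$u_j(\xx,\pp)\ge (v_{ij}-c_j-U_i)\,\mu_i.$$
I expect this step to be the main obstacle, mostly in checking that the targeted deviation price is feasible ($p'_j\ge c_j$ in the non-vacuous case) and that capturing type $i$ alone already forces the claimed profit, regardless of how the remaining types break ties.

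With this inequality in hand, for each type $i$ let $j^*(i)\in\arg\max_{j\in M}\{v_{ij}-c_j\}$ be an optimal vendor for that type. Applying the deviation bound with $j=j^*(i)$ gives
$$\mu_i\max_{j\in M}\{v_{ij}-c_j\}=\mu_i\big(v_{i,j^*(i)}-c_{j^*(i)}\big)\le u_{j^*(i)}(\xx,\pp)+\mu_i U_i=u_{j^*(i)}(\xx,\pp)+t_i(\xx,\pp).$$
Summing over the $n$ types produces $\SW^*\le \sum_{i\in N}u_{j^*(i)}(\xx,\pp)+\sum_{i\in N}t_i(\xx,\pp)$.

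Finally, I would extract the factor $n$ from the first sum, which is the only place it enters. Each summand $u_{j^*(i)}(\xx,\pp)$ is at most the total vendor profit $\sum_{j\in M}u_j(\xx,\pp)$, because all vendor utilities are non-negative, and there are exactly $n$ summands; hence $\sum_{i\in N}u_{j^*(i)}(\xx,\pp)\le n\sum_{j\in M}u_j(\xx,\pp)$. Combining this with $t_i(\xx,\pp)\ge 0$ gives $\SW^*\le n\sum_{j\in M}u_j(\xx,\pp)+\sum_{i\in N}t_i(\xx,\pp)\le n\big(\sum_{j\in M}u_j(\xx,\pp)+\sum_{i\in N}t_i(\xx,\pp)\big)=n\,\SW(\xx)$. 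Since $(\xx,\pp)$ was an arbitrary equilibrium, the price of anarchy is at most $n$.
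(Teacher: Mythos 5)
Your proof is correct and follows essentially the same route as the paper: the same undercutting deviation by each type's optimal vendor yields the paper's key inequality $t_i(\xx,\pp)+u_{o(i)}(\xx,\pp)\geq \mu_i(v_{i,o(i)}-c_{o(i)})$ (your bound $u_{j^*(i)}(\xx,\pp)\geq \mu_i(v_{i,j^*(i)}-c_{j^*(i)}-U_i)$ is this inequality rewritten), and the factor $n$ enters through the identical counting step that a single vendor can serve as the optimal vendor for all $n$ types. The only difference is cosmetic: working with the common utility level $U_i$ lets you state one uniform deviation bound, which sidesteps the paper's preliminary reduction to non-split assignments and its two-case analysis ($\eta(i)=o(i)$ versus $\eta(i)\neq o(i)$).
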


\begin{proof}
Consider an equilibrium $(\xx,\pp)$ of a price competition game. We first claim that if buyers of some type $i$ are split between two vendors $j$ and $j'$, then it must be $p_j=c_j$ and $p_{j'}=c_{j'}$ (hence, the two vendors have zero utility) and the assignment in which all these buyers are assigned to vendor $j$ without changing the prices is still an equilibrium and has the same social welfare. This is due to the fact that, at equilibrium, the utilities of buyers of type $i$ assigned to $j$ and $j'$ should be the same. Hence, if one of the two vendors had a price strictly higher than its production cost, it could increase its utility by negligibly decreasing its price; this would result in attracting all buyers of type $i$ previously assigned to $j$ and $j'$. So, by moving all buyers of type $i$ from vendor $j'$ to vendor $j$, we still have an assignment that is consistent to $\pp$ in which the utilities of buyers and vendors do not change. Clearly, this new assignment is an equilibrium with a social welfare equal to the initial one.

So, without loss of generality, we consider an equilibrium $(\xx,\pp)$ such that, for every $i$, all buyers of type $i$ are assigned to the same vendor $j$, i.e., $x_{ij}=\mu_i$. We denote by $\eta(i)$ the vendor where the buyers of type $i$ are assigned in $\xx$. Also, we denote by $o(i)$ the vendor to which the buyers of type $i$ are assigned in an optimal assignment. Again, if there exists an optimal assignment where buyers of the same type $i$ are split among different vendors, we can move all these buyers to a single vendor without decreasing the social welfare, hence $o(i)$ is well defined.  We can further assume that when $\eta(i)\not=o(i)$, this implies that $v_{i,o(i)}-c_{o(i)}>v_{i,\eta(i)}-c_{\eta(i)}$. If this is not the case and it is $v_{i,o(i)}-c_{o(i)}=v_{i,\eta(i)}-c_{\eta(i)}$, we can consider the optimal assignment that assigns the buyers of type $i$ to vendor $\eta(i)$.

We will show that
\begin{eqnarray}\label{eq:player-i}
t_i(\xx,\pp) + u_{o(i)}(\xx,\pp) &\geq& \mu_i(v_{i,o(i)}-c_{o(i)})
\end{eqnarray}
for every $i\in N$. Then, the following derivation can prove the theorem:
\begin{eqnarray*}
n \cdot \SW(\xx) &\geq &\sum_i{t_i(\xx,\pp)} + n\cdot \sum_j{u_j(\xx,\pp)}\\
&\geq &\sum_i{\left(t_i(\xx,\pp)+u_{o(i)}(\xx,\pp) \right)}\\
&\geq & \sum_i{\mu_i \left( v_{i,o(i)}-c_{o(i)} \right)}\\
&=& \SW^*.
\end{eqnarray*}
The first inequality follows by the definition of the social welfare, the second one follows from the fact that the function $o(\cdot)$ can assign at most all $n$ buyer types to the same vendor, the third one follows from (\ref{eq:player-i}) and the last equality follows from the definition of the optimal social welfare.

It remains to prove inequality (\ref{eq:player-i}). If $\eta(i)=o(i)$, we use the fact that vendor $o(i)$ attracts (at least) the buyers of type $i$ at equilibrium. Hence,
\begin{eqnarray*}
t_i(\xx,\pp) + u_{o(i)}(\xx,\pp) &\geq& \mu_i(v_{i,o(i)}-p_{o(i)})+\mu_i(p_{o(i)}-c_{o(i)})\\
&= &\mu_i(v_{i,o(i)}-c_{o(i)}).
\end{eqnarray*}

If $\eta(i) \neq o(i)$, let $q_{o(i)}=v_{i,o(i)}-v_{i,\eta(i)}+p_{\eta(i)}$. First, we claim that $q_o(i)$ is a valid price for vendor $o(i)$. Indeed,
by our assumption $v_{i,o(i)}-c_{o(i)}>v_{i,\eta(i)}-c_{\eta(i)}$ above and since $p_{\eta(i)}\geq c_{\eta(i)}$, we have
\begin{eqnarray*}
q_{o(i)} &=& v_{i,o(i)} - v_{i,\eta(i)}+ p_{\eta(i)}\\
&>& c_{o(i)}-c_{\eta(i)}+p_{\eta(i)}\\
&\geq & c_{o(i)}.
\end{eqnarray*}
This means that vendor $o(i)$ can consider deviating to any price value $\delta$ from the non-empty interval $[c_{o(i)}, q_{o(i)})$. From the discussion above, it is $v_{i,o(i)}-\delta > v_{i,\eta(i)}-p_{\eta(i)}$, which means that, by deviating to $\delta$, vendor $o(i)$ can attract the buyers of type $i$ from vendor $\eta(i)$. Using the equilibrium condition (and denoting by $\xx'$ the resulting consistent assignment when vendor $o(i)$ deviates to price $\delta$), we have that
\begin{eqnarray*}
u_{o(i)}(\xx,\pp) &\geq &u_{o(i)}(\xx',(\delta,\pp_{-o(i)}))\\
&\geq& \mu_i(\delta-c_{o(i)}).
\end{eqnarray*}
Since the above inequality holds for any $\delta<q_{o(i)}$, it must also be
\begin{eqnarray*}
u_{o(i)}(\xx,\pp) &\geq & \mu_i(q_{o(i)}-c_{o(i)})\\
&=& \mu_i(v_{i,o(i)}-c_{o(i)})-\mu_i(v_{i,\eta(i)}-p_{\eta(i)})\\
&=& \mu_i(v_{i,o(i)}-c_{o(i)})-t_i(\xx,\pp).
\end{eqnarray*}
This completes the proof of the theorem.
\end{proof}

Recall that the upper bound on the price of anarchy is meaningful only for games that admit equilibria. In the following, we show that games with one buyer type always have equilibria (and, by Theorem \ref{thm:poa-upper}, all equilibria have optimal social welfare since the price of anarchy is $1$) while the existence of a second buyer type may result to instability.
\begin{lemma}\label{lem:existence}
Price competition games with one buyer type always have at least one equilibrium.
\end{lemma}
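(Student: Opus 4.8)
The plan is to exhibit an explicit equilibrium, since by Theorem~\ref{thm:poa-upper} any equilibrium of a one-type game is automatically socially optimal, so existence is all I need. With a single buyer type all buyers behave identically, so it suffices to track, for each vendor $j$, its \emph{surplus} $s_j = v_{1j} - c_j$ (the artificial abstain vendor has surplus $0$). Let $j^*$ be a vendor maximizing $s_j$ over all vendors (including abstain), and let $\sigma = \max_{j \neq j^*} s_j$ be the second largest surplus; note $\sigma \ge 0$, because the abstain vendor always contributes surplus $0$, and $\sigma \le s_{j^*}$.

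If the maximum is attained by the abstain vendor, I would take the degenerate equilibrium in which every vendor prices at cost and all buyers abstain: no vendor $j$ can attract buyers profitably, since attracting them requires $v_{1j} - p_j \ge 0$, i.e. $p_j \le v_{1j} \le c_j$, leaving no room for positive profit. Otherwise $j^*$ is a real vendor, and here I would set $p_{j^*} = v_{1,j^*} - \sigma$, price every other vendor at its cost, and assign all buyers to $j^*$ via an assignment $\xx$. This price is feasible because $p_{j^*} - c_{j^*} = s_{j^*} - \sigma \ge 0$, and the assignment is consistent with $\pp$: at these prices the buyers' utility from $j^*$ equals $\sigma$, while every other vendor $j$ offers utility $s_j \le \sigma$ and the abstain option offers $0 \le \sigma$, so $j^* \in D_1(\pp)$.

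It then remains to verify that no vendor can profitably deviate. For $j^*$, raising its price drops the buyers' utility below $\sigma$, so they strictly prefer the second-best option and $j^*$ loses all of them; lowering its price keeps the same buyers but shrinks the per-unit profit; hence $p_{j^*}$ is optimal. For any other vendor $j$, capturing the buyers would require offering utility at least $\sigma$, i.e. a price at most $v_{1j} - \sigma = c_j + (s_j - \sigma) \le c_j$, which yields no positive profit, so its best response is indeed to remain at cost with zero utility.

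The step I expect to be most delicate is the tie at $p_{j^*} = v_{1,j^*} - \sigma$, where $j^*$ and the second-best vendor(s) are simultaneously in the demand set---the classic open-interval obstruction of Bertrand competition. I would resolve it by appealing directly to the equilibrium definition of Section~\ref{sec:prelim}, which asks each vendor's price to maximize utility over \emph{all} consistent assignments $\yy$: this lets $j^*$ retain all buyers at the tie, making $v_{1,j^*} - \sigma$ (rather than every strictly smaller price) a genuine best response and closing the argument. I would also check the boundary subcase $s_{j^*} = \sigma = 0$, where $j^*$ earns zero profit and the construction simply coincides with an all-abstain-type equilibrium.
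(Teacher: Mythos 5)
Your proof is correct and takes essentially the same route as the paper: choose the vendor $j^*$ maximizing the surplus $v_{1j}-c_j$, price it at $v_{1,j^*}$ minus the second-best surplus so that buyers are exactly indifferent to the runner-up, price every other vendor at cost, assign all buyers to $j^*$, and verify that raising loses the buyers while lowering only shrinks the margin. Your explicit handling of the abstain-vendor-is-best edge case and of the tie in the demand set (resolved via the consistent-assignment clause in the equilibrium definition) is slightly more careful than the paper's writeup, but the construction and deviation analysis are identical.
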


\begin{proof} Consider a price competition game with one buyer type. We use the simplified notation $v_j$ to denote the valuation of the buyers for vendor $j$. Let $j^*\in\arg\max_{j\in M}\{v_j-c_j\}$ and $j'\in\arg\max_{j\in M\setminus j^*}\{v_j-c_j\}$ be two vendors with the highest values for the difference $v_j-c_j$. Set $p_{j^*}=v_{j^*}-v_{j'}+c_{j'}$ and $p_j=c_j$ for any vendor $j\neq j^*$. We claim that this price vector together with the consistent assignment $\xx$ that assigns the buyers to vendor $j^*$ is an equilibrium. Indeed, no vendor $j \neq j^*$ has any incentive to change its price; a decrease would result in negative utility while an increase would not change the assignment. Moreover, vendor $j^*$ has no incentive to change its price; a decrease can only lower its utility while an increase would result in a new assignment in which all buyers are attracted by vendor $j'$.
\end{proof}

\begin{lemma}\label{lem:existence2}
There exists a price competition game with two buyer types that admits no equilibrium.
\end{lemma}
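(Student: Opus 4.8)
The plan is to exhibit one explicit instance and rule out equilibria by exhausting the finitely many ways buyers can be routed to vendors. I would take two vendors with zero production cost and two buyer types with \emph{symmetrically crossed preferences but asymmetric volumes}: type $1$ has volume $\mu_1=1$ and valuations $v_{11}=1,\ v_{12}=2$ (so it prefers vendor $2$), while type $2$ has volume $\mu_2=2$ and valuations $v_{21}=2,\ v_{22}=1$ (so it prefers vendor $1$). The intended instability is a cycle: vendor $1$ wants to charge its loyal and larger type-$2$ clientele a premium, raising $p_1$ until type $2$ is just indifferent, i.e.\ up to $p_1=p_2+1$; but at that point vendor $2$, so far serving only the small type $1$, can shade its price slightly and capture \emph{both} types, since the large type $2$ becomes strictly cheaper at vendor $2$. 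This leaves vendor $1$ empty and forces it to undercut in turn, and the only candidate resting point, the Bertrand floor $p_1=p_2=0$, is itself unstable because each vendor strictly gains by raising its price against the type that natively prefers it.

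To turn this into a proof I would first reduce the continuum of price vectors to a finite list of \emph{patronage patterns}, one for each assignment of the two types to $\{1,2,\text{abstain}\}$. The reduction rests on the observation already established in the proof of Theorem~\ref{thm:poa-upper}: in any equilibrium, a buyer type split between two vendors forces both to price at cost. Hence, away from the degenerate all-prices-zero case, each type is served by a single vendor, and ties with the abstain option are absorbed by letting the charging vendor claim the indifferent buyers, which the equilibrium definition permits. Within the feasible price region of each pattern, a vendor's profit is \emph{monotone} in its own price as long as it keeps its clientele, so a best response is pinned to the boundary of the region — either the indifference threshold of a served type or the abstention threshold $v_{ij}=p_j$ — and I would then check that every such boundary configuration admits a profitable deviation.

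The patterns split into three groups. When both types patronise the same vendor, the other vendor earns zero and can extract strictly positive profit from the type that natively prefers it, so these patterns collapse at once; the analogous one-line argument disposes of the all-prices-zero and full-abstention degeneracies. The decisive pattern is the split in which the large type $2$ buys from vendor $1$ and type $1$ from vendor $2$. Since vendor $1$'s profit $2p_1$ is strictly increasing, it is pushed up to $p_1=\min(p_2+1,2)$. If this equals $p_2+1<2$, then type $2$ is indifferent between the two vendors, and — because the equilibrium definition credits a deviator with its most favourable consistent assignment, exactly the effect exploited in the proof of Theorem~\ref{thm:poa-upper} — vendor $2$ can also claim type $2$ at its current price, earning $(\mu_1+\mu_2)p_2=3p_2>p_2$ when $p_2>0$, and otherwise earning $0$ and gaining by raising its price. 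If instead $p_1=2\le p_2+1$, then vendor $2$ makes only $p_2\le 2$ from type $1$, and by shading its price just below $1$ it makes type $2$ strictly better off at vendor $2$ while retaining type $1$, capturing volume $\mu_1+\mu_2=3$ for profit approaching $3>2$. Hence the split is never an equilibrium.

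The hard part is not any single deviation but \emph{exhaustiveness}: I must verify that the finite list of patterns genuinely covers the whole continuum of price vectors together with every consistent assignment, including the boundary cases where a type is indifferent between two vendors or between a vendor and abstaining. The delicate point is the optimistic semantics of the equilibrium definition, under which a deviating vendor is credited with the most favourable consistent assignment; this is what lets an abandoned vendor ``claim'' tied buyers in the easy patterns, but it must be invoked consistently, and in the split pattern it is tamed only by the marginal-cost observation borrowed from the proof of Theorem~\ref{thm:poa-upper}. Getting the interface between these tie cases and the monotonicity argument right is where the real care lies; the arithmetic of each individual deviation is routine.
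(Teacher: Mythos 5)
Your proposal is correct and takes essentially the same approach as the paper: the paper also exhibits an explicit two-vendor, two-buyer-type instance with crossed preferences (symmetric valuations $5$ and $3$, unit volumes) and rules out every consistent assignment pattern via the same two deviation types — raising a price to the indifference/abstention boundary and undercutting to capture the rival's buyers — with the only candidate resting point ($p_\ell=p_r=5$) destroyed by an undercut that grabs both types. Your instance (asymmetric volumes $1$ and $2$ with crossed valuations $1$ and $2$) and your tie-claiming step via the optimistic equilibrium semantics are sound, but they amount to a different numerical example rather than a different method.
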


\begin{proof}
Consider the price competition game with two buyer types of unit volume each and two vendors with a production cost of $0$. We use the terms left and right to refer to the vendors and buyer types. The valuation of the left buyers is $v_{\ell \ell}=5$ for the left vendor and $v_{\ell r}=3$ for the right vendor; the valuation of the right buyers is $v_{r\ell}=3$ for the left vendor and $v_{r r}=5$ for the right vendor. We show that no pair of a price vector (with prices $p_\ell$ and $p_r$ for the left and the right vendor, respectively) and consistent assignment can be an equilibrium. We distinguish between cases.

If only the left vendor is assigned buyers (the argument holds symmetrically for the right vendor), the right vendor can attract the right buyers by setting its price to any non-zero value smaller than $2$ and increase its utility from $0$ to positive.

The case where the left vendor gets no left buyers and the right vendor gets no right buyers requires the inequalities $v_{\ell \ell}-p_\ell\leq (v_{\ell r}-p_r)^+$ and $v_{r r}-p_r\leq (v_{r\ell}-p_\ell)^+$, i.e., $5- p_\ell \leq (3-p_r)^+$ and $5- p_r \leq (3-p_\ell)^+$; this implies that $p_\ell\geq 5$ and $p_r\geq 5$, which in turn implies that no vendor receives any buyers since they all have negative utility and prefer to abstain. Clearly, this is not an equilibrium since a vendor can attract all buyers, by lowering its price to smaller than $5$, and, thus, obtain strictly positive utility.

The only remaining case is when some left buyers are assigned to the left vendor and some right buyers are assigned to the right vendor. This requires the two inequalities $v_{\ell\ell}-p_\ell\geq (v_{\ell r}-p_r)^+$ and $v_{r r}-p_r\geq (v_{r\ell}-p_\ell)^+$. Clearly, if the first inequality is strict, the left vendor can increase $p_\ell$ (and, consequently, its utility) so that the first inequality becomes equality. Similarly, the second inequality should be an equality as well. So, the requirements for this case are actually the equalities $5-p_\ell = (3-p_r)^+$ and $5-p_r=(3-p_\ell)^+$. The only prices that satisfy both equalities simultaneously are $p_\ell=p_r=5$ which do not form an equilibrium since any vendor can set its price to a value negligibly smaller than $3$, attract all buyers, and increase its utility from $5$ to arbitrarily close to $6$. The proof is complete.
\end{proof}

We remark that Meir et al.~\cite{MLTB14} also present a two-vendor four-buyer-type price competition game that does not admit any equilibrium; the game in the proof of Lemma \ref{lem:existence2} is the simplest one with this property.

\section{Complexity of equilibria}\label{sec:complexity}
We begin the discussion of this section by formulating some concrete computational problems related to equilibria of price competition games.

\begin{quote}
{\sc VerifyEquilibrium}: Given a price vector $\pp$ and a buyers-to-vendors assignment $\xx$ in a price competition game ${\cal G}$, decide whether $(\xx,\pp)$ is an equilibrium of ${\cal G}$.
\end{quote}

\begin{quote}
{\sc ComputePrice}: Given a buyers-to-vendors assignment $\xx$ in a price competition game ${\cal G}$, decide whether there exists a price vector $\pp$ to which $\xx$ is consistent so that $(\xx,\pp)$ is an equilibrium of ${\cal G}$.
\end{quote}

\begin{quote}
{\sc PriceCompetition}: Decide whether a given price competition game has any equilibrium or not.
\end{quote}

\subsection{Complexity of {\sc VerifyEquilibrium}}

{\sc VerifyEquilibrium} can be easily seen to be solvable in $\bigO(nm\log n)$ time. First, one needs to check whether $\xx$ is consistent to $\pp$, i.e., whether the utility of each buyer type is maximized at the vendor(s) used in $\xx$; this can be done by computing at most $\bigO(nm)$ buyer utilities. Then, for every vendor $j$, it suffices to sort (in $\bigO(n\log{n})$ time) the buyer types according to the maximum price that would incentivize them to select vendor $j$, given the prices for the remaining vendors, and, then, compute (in $\bigO(n)$ time) the price that maximizes the vendor's utility when deviating to this price level (a vendor's utility is equal to the volume of buyers it attracts times the difference of the price level from the production cost). The final decision is YES if $\xx$ is consistent to $\pp$ and the utility of all vendors in $(\xx,\pp)$ is equal to the maximum utility over all the deviations considered; otherwise, it is NO. In the following, we call this algorithm \texttt{Verify}; its pseudocode appears below as Algorithm \ref{alg:verify}.

\IncMargin{1em}
\begin{algorithm}
\SetKwInOut{Input}{input}\SetKwInOut{Output}{output}
\Input{A price competition game ${\cal G}$, a price vector $\pp$ and a buyers-to-vendors assignment $\xx$}
\Output{Whether $(\xx,\pp)$ is an equilibrium for ${\cal G}$}
\BlankLine
\ForEach{buyer type $i$}{$D_i(\pp) \leftarrow \arg\max_{j\in M}{\{v_{ij}-p_j\}}$
}
\ForEach{buyer type $i$}{
\ForEach{vendor $j\notin D_i(\pp)$}{\If{$x_{ij}>0$}{ \Return{``$(\xx,\pp)$ is not an equilibrium for ${\cal G}$''}}
}
}
\ForEach{vendor $j$}{
\ForEach{buyer type $i$}{
$q_i \leftarrow  \arg\max_t{\{v_{ij}-t>\max_{j'\neq j}{\{v_{ij'}-p_{j'}\}}\}}$
}
$\pi \leftarrow$ indices of $\qq$ in descending order \\
$vol \leftarrow 0$\\
\For{i $\leftarrow$ 1 \emph{\KwTo} n}{
$vol \leftarrow vol + \mu_{\pi(i)}$\\
\If{$vol\cdot (q_{\pi(i)}-c_j) > u_j(\xx,\pp)$}{\Return {``$(\xx,\pp)$ is not an equilibrium for ${\cal G}$''}}
}
}
\Return{``$(\xx,\pp)$ is an equilibrium for ${\cal G}$''}
\caption{\texttt{Verify}\label{alg:verify}}
\end{algorithm}\DecMargin{2em}

\subsection{Complexity of {\sc ComputePrice}}

The problem {\sc ComputePrice} looks significantly more difficult at first glance since there are too many price vectors (to which $\xx$ is consistent) that have to be considered. Interestingly, we will present a polynomial-time algorithm (henceforth called \texttt{CandidatePrice}) which, given a price competition game and a buyers-to-vendors assignment $\xx$, comes up with a single candidate price vector $\pp$ that can in turn easily be checked whether it forms an equilibrium together with $\xx$ using \texttt{Verify}.

\IncMargin{1em}
\begin{algorithm}
\SetKwInOut{Input}{input}\SetKwInOut{Output}{output}
\Input{A price competition game ${\cal G}$ and a buyers-to-vendors assignment~$\xx$}
\Output{A price vector $\pp$}
\SetKw{Or}{or}
\BlankLine
\tcc*[h]{Create the directed graph $H$}\\
$V(H) \leftarrow \emptyset$, $E(H) \leftarrow \emptyset$\\
\ForEach{vendor $j$}{$V(H) \leftarrow  V(H)\cup \{j\}$}
\ForEach{vendor $j$ and buyer type $i$ with $x_{ij}>0$}{
\ForEach {vendor $j'$}{ \If{$v_{ij}-c_j\leq v_{ij'}-c_{j'}$}{$E(H) \leftarrow  E(H) \cup \{(j,j')\}$} }
}
\tcc*[h]{Compute the set $Z$ of seed vendors}\\
$Z \leftarrow  \emptyset$\\
\Repeat{$Z$ does not change during a round}{
\ForEach{vendor $j$}{
Set $B(j) \leftarrow  \{i: x_{ij}>0\}$\\
\If{$B(j) = \emptyset$ \Or $\min_{i\in B(j)}{v_{ij}} = c_j$ \Or $j$ is in a directed cycle of $H$ \Or $\exists j'\in Z: (j,j')\in H$}{$Z \leftarrow  Z \cup \{j\}$}
}
}
\tcc*[h]{Compute the price vector}\\
\ForEach{seed vendor $j\in Z$}{
$p_j \leftarrow  c_j$
}
\ForEach{non-seed vendor $j\not\in Z$}{
\begin{eqnarray*}
p_j \leftarrow  \left\{\begin{array}{l l} \min\limits_{i:x_{ij}>0}{v_{ij}}, & \mbox{if } Z=\emptyset\\
\min\limits_{i:x_{ij}>0}{\left\{v_{ij}-\max\limits_{j'\in Z}{\{v_{ij'}-c_{j'}\}^+}\right\}}, & \mbox{otherwise} \end{array}\right.
\end{eqnarray*}
}
\caption{\texttt{CandidatePrice}\label{alg:candidateprice}}
\end{algorithm}\DecMargin{1em}

\texttt{CandidatePrice} works as follows; see also Algorithm \ref{alg:candidateprice}. It first computes a set $Z$ of {\em seed} vendors which will have a price equal to their production cost. In order to define $Z$, it is convenient to consider the directed graph $H$ that has a node for each vendor and a directed edge from node $j$ to node $j'$ labelled by $i$ if buyers of type $i$ are assigned to vendor $j$ in $\xx$ and, furthermore, $v_{ij}-c_j \leq v_{ij'}-c_{j'}$. Now, the set $Z$ is defined recursively as follows:
\begin{enumerate}
\item Any vendor that is not assigned any buyer in $\xx$ belongs to $Z$; such a vendor is called {\em empty}.
\item Any vendor $j$ such that $\min_{i:x_{ij}>0}{v_{ij}}=c_j$ belongs to $Z$.
\item Any vendor that is part of a directed cycle in $H$ belongs to $Z$.
\item Any vendor that has a directed edge to a vendor of $Z$ also belongs to $Z$.
\end{enumerate}
\texttt{CandidatePrice} returns the price vector $\pp$ with $p_j=c_j$ for each seed vendor $j$ and
\begin{eqnarray}\label{eq:candidate-price}
p_j = \left\{\begin{array}{l l} \min\limits_{i:x_{ij}>0}{v_{ij}}, & \mbox{if } Z=\emptyset\\
\min\limits_{i:x_{ij}>0}{\left\{v_{ij}-\max\limits_{j'\in Z}{\{v_{ij'}-c_{j'}\}^+}\right\}}, & \mbox{otherwise} \end{array}\right.
\end{eqnarray}
for each non-seed vendor.

\begin{example}\label{ex:computeprice}
Consider the price competition game ${\cal G}$ shown in Table \ref{tab:example2}.

\begin{table}[htbp]
\centerline{
\begin{tabular}{l|c c c c}
  & vn $1$& vn $2$& vn $3$& vn $4$\\[-0.5em]
  & $c_1=0$ & $c_2=0$& $c_3=0$& $c_4=0$\\\hline
  bt $A$: $\mu_A=1$ & \circled{$1$} & $1$ & $0$ &$1$\\
  bt $B$: $\mu_B=1$ & $1$ & $0$ & $0$ & \circled{$2$} \\
  bt $C$: $\mu_C=1$ & $0$ & \circled{$1$} & $1$ & $0$\\
  bt $D$: $\mu_D=1$ & \circled{$1$} & $0$ & $1$ & $0$\\
  bt $E$: $\mu_E=1$ & $1$ & $0$ & \circled{$1$} & $0$
\end{tabular}}
  \caption{The price competition game discussed in Example \ref{ex:computeprice}. The assignment $\xx$ is denoted by circles.} \label{tab:example2}
  \end{table}

Let $\xx$ be the assignment where all buyers of type $A$ are assigned to vendor $1$, all buyers of type $B$ are assigned to vendor $4$, buyers of type $C$ are assigned to $2$, buyers of type $D$ are assigned to vendor $1$ and, finally, buyers of type $E$ are assigned to vendor $3$. The directed graph $H$ is depicted in Figure \ref{fig:computeprice}.

Note that vendors $1$, $2$, and $3$ are seed vendors since they are part of a directed cycle. This leads to the following price vector $\pp = \{0,0,0,1\}$. In order to verify that $(\xx,\pp)$ is indeed a pure Nash equilibrium, it suffices to run algorithm \texttt{Verify} with input $(\xx,\pp)$; this is true since all buyers are assigned to vendors in their demand set, vendors $1$, $2$ and $3$ cannot increase their utility by unilaterally increasing their price as, then, they would lose all assigned buyers, while vendor $4$ cannot increase its price, since its assigned buyers would move to vendor $1$, and cannot attract new buyers by decreasing its price.

\begin{figure}[htbp]
\centering
\includegraphics[trim = 0 0 0 0, width = 1.4in]{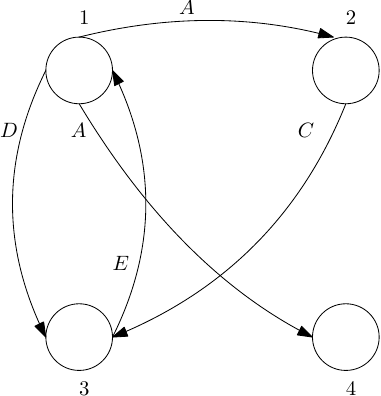}
\caption{The directed graph $H$ arising from the assignment $\xx$ and the price competition game of Table \ref{tab:example2}.} \label{fig:computeprice}
\end{figure}
\end{example}

The correctness of the algorithm is given by the following lemma.
\begin{lemma}\label{lem:candidate}
Let $\pp$ be the price vector returned by \texttt{CandidatePrice} on input a price competition game ${\cal G}$ and a buyers-to-vendors assignment $\xx$. If ${\cal G}$ admits an equilibrium $(\xx,\qq)$, then $q_j=p_j$ for every non-empty vendor $j$.
\end{lemma}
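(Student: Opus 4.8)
The plan is to prove the two inequalities $q_j\le p_j$ and $q_j\ge p_j$ separately for every non-empty vendor $j$, after first pinning down the prices of the seed vendors. Throughout, I fix an equilibrium $(\xx,\qq)$ and, for each type $i$, let $U_i=\max_{j\in M}\{v_{ij}-q_j\}$ denote the (well-defined, by consistency) equilibrium utility of the buyers of type $i$; I also abbreviate $W_i=\max_{j\in Z}\left(v_{ij}-c_j\right)^+$, so that the algorithm sets $p_j=\min_{i:x_{ij}>0}\{v_{ij}-W_i\}$ for each non-seed vendor. Note that the abstain vendor always lies in $Z$ (it is empty or satisfies rule~2), so $W_i\ge 0$ and $W_i$ equals the plain maximum of $v_{ij}-c_j$ over $j\in Z$.

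First I would show that every non-empty seed vendor is priced at its production cost in $\qq$, by induction along the recursive construction of $Z$. For a rule-2 vendor $j$, the type attaining $\min_{i:x_{ij}>0}v_{ij}=c_j$, together with consistency $v_{ij}-q_j\ge 0$ and $q_j\ge c_j$, forces $q_j=c_j$. For an empty vendor (and, more generally, to handle empty rule-4 targets) I would use the deviation argument that an empty vendor earning $0$ could capture any type $i$ with $U_i<v_{ij}-c_j$ by a price just above $c_j$; hence $U_i\ge v_{ij}-c_j$ for every $i$. For a rule-3 (cycle) vendor, combining the defining inequality of each edge $v_{ij}-c_{j}\le v_{ij'}-c_{j'}$ with consistency $v_{ij}-q_j\ge v_{ij'}-q_{j'}$ around the cycle forces all margins $q_j-c_j$ to be equal and each labelling type to be indifferent between the two endpoints; an infinitesimal price cut by the head of an edge would then capture the indifferent buyer and strictly increase profit unless the common margin is $0$. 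Finally, a rule-4 vendor $j$ has an edge, labelled by some type $i$, to a vendor $j'\in Z$ for which the inductive step already gives $U_i\ge v_{ij'}-c_{j'}$; with the edge inequality this yields $v_{ij}-q_j=U_i\ge v_{ij'}-c_{j'}\ge v_{ij}-c_j$, i.e. $q_j\le c_j$ and thus $q_j=c_j$. The upshot is the uniform bound $U_i\ge\left(v_{ij}-c_j\right)^+$ for every $j\in Z$ and every $i$, hence $U_i\ge W_i$ for all $i$; for a non-seed $j$ this gives $q_j=v_{ij}-U_i\le v_{ij}-W_i$ for each served type, so $q_j\le p_j$. In particular $q_j=c_j=p_j$ on non-empty seeds.

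For the reverse inequality on non-seed vendors I would argue by contradiction. Suppose some non-seed $j$ has $q_j<p_j$; then every served type has strict excess $U_i>W_i$. Since $j$ is a best response it cannot raise its price, so some served type $i_b$ is tied with another vendor $j''$, i.e. $U_{i_b}=v_{i_bj''}-q_{j''}$. As $U_{i_b}>W_{i_b}$ and every seed offers type $i_b$ utility at most $W_{i_b}$, the vendor $j''$ must be a non-seed; and because type $i_b$ is served by $j$, vendor $j''$ could capture it by an infinitesimal price cut, so the no-deviation condition forces the margin of $j''$ to be $0$. Hence $q_{j''}=c_{j''}<p_{j''}$ (recall $p_{j''}>c_{j''}$ for every non-seed, as such a vendor is not a rule-2 vendor and has no edge into $Z$), so $j''$ is again a non-seed with strict excess on all its types and zero margin. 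Iterating the same argument from $j''$ produces an infinite walk $j''\to j'''\to\cdots$ of zero-margin non-seed vendors, in which each step is a genuine directed edge of $H$, because two indifferent endpoints with zero margin have equal valuation-minus-cost differences for the connecting type. Finiteness of $M$ forces this walk to close into a directed cycle of $H$, whose vertices are by definition seeds — contradicting that they are non-seeds. Therefore $q_j\ge p_j$, and combined with the first direction $q_j=p_j$ on every non-seed.

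The routine parts are the consistency and single-deviation manipulations; the step I expect to be the main obstacle is the reverse inequality, specifically making precise that the chain of ``binding tie'' partners stays within the non-seed, zero-margin vendors and traces genuine edges of $H$, so that its unavoidable cycle collides with rule~3. The cycle case of the seed step — driving a positive common margin down to $0$ by an infinitesimal capture of an indifferent buyer — is the other delicate point, since it is where the no-deviation condition, rather than mere consistency, is essential.
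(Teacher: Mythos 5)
Your proof is correct, and while it shares the forced skeleton with the paper's proof (seed vendors priced at cost, non-seeds pinned between two inequalities), it organizes the two delicate steps genuinely differently. For the seeds, your arguments for rules 2, 3, 4 coincide with the paper's (tightness around cycles plus the infinitesimal undercut, and induction along the order of insertion into $Z$); but you never touch the equilibrium prices, handling empty vendors through the utility bound $U_i\geq v_{ij}-c_j$ obtained from their deviation, whereas the paper first resets empty vendors' prices to cost and must re-verify that the modified pair is still an equilibrium. Your observation that the abstain vendor always lies in $Z$ (so $W_i\geq 0$ and the algorithm's two cases merge into one formula) is a further small cleanup that is consistent with the paper's formulas in both cases. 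The real divergence is in the reverse inequality $q_j\geq p_j$: the paper first proves that \emph{every} non-seed vendor has margin strictly above cost, by a leaf-to-root induction on the acyclic subgraph $H'$ induced by non-seeds, and only then runs a two-case analysis of the tie partner (seed partner forces $q_j\geq p_j$; non-seed partner has positive margin and undercuts). You instead derive the contradiction in one sweep: the tie partner of a vendor with $q_j<p_j$ must be a non-seed forced to zero margin by the undercut argument, zero margins turn each successive tie into a genuine edge of $H$, and finiteness closes the resulting walk into a directed cycle of non-seeds, contradicting rule 3. Both routes ultimately exploit the same structural fact --- rule 3 makes the non-seed subgraph acyclic --- but your pigeonhole/walk-closure version avoids the intermediate positive-margin lemma entirely, while the paper's induction makes that lemma explicit (non-seed equilibrium prices strictly exceed costs), which is mildly informative in its own right and keeps each step of the argument local to a node and its children in the DAG.
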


\begin{proof}
Consider the price vector $\pp$ returned by \texttt{CandidatePrice}. We will show that $\qq$ satisfies the very same properties as $\pp$ does, for non-empty seed and non-seed vendors. We begin by arguing about the empty vendors, where it may hold that $p_j\neq q_j$ for an empty vendor $j$.

\paragraph{Empty vendors}
Given the price vector $\qq$, we construct a new price vector $\bar{\qq}$ by setting $\bar{q}_j=c_j$ for every empty vendor $j$ in $\xx$ and $\bar{q}_j=q_j$ for every other vendor $j$. Since $(\xx,\qq)$ is an equilibrium, $(\xx,\bar{\qq})$ is an equilibrium as well. To see why, it suffices to show that no buyer has any incentive to deviate to an empty vendor in $\xx$. Assume otherwise that $(\xx,\bar{\qq})$ is not an equilibrium; we will show that $(\xx,\qq)$ should not be an equilibrium as well. Since $(\xx,\bar{\qq})$ is not an equilibrium, there must exist some $i$ such that buyers of type $i$ are assigned to a vendor $j$ in $\xx$ and satisfy $v_{ij'}-\bar{q}_{j'} > v_{ij}-\bar{q}_{j}$ for some vendor $j'$ that is empty in $\xx$. From the definition of $\bar{\qq}$, this means that $v_{ij'}-c_{j'} > v_{ij}-q_{j}$ and vendor $j'$ could set its price to a sufficiently small but strictly higher than its production cost price to attract the buyers of type $i$ and increase its utility from $0$ (since $j'$ is an empty vendor according to $\xx$) to positive; this contradicts the assumption that $(\xx, \qq)$ is an equilibrium. Now, clearly, $\bar{\qq}$ has the desired structure (i.e., prices equal to production costs) for vendors that are empty in $\xx$, and it suffices to prove that $\pp=\bar{\qq}$.

For the remainder of the proof, recall that, by the definition of $\qq$, $q_j = \bar{q}_j$ for any non-empty vendor $j$.

\paragraph{Seed vendors}
First, observe that for any non-empty vendor $j$ with $\min_{i:x_{ij}>0}{v_{ij}}=c_j$, it should obviously be $\bar{q}_j=c_j$ since $\xx$ is consistent to $\bar{\qq}$.

Now, consider a directed cycle $L$ of length $\ell$ in $H$; we have to show that $\bar{q}_j=c_j$ for each vendor $j\in L$. Let $j_1$, $j_2$, ..., $j_{\ell}$ be the vendors in $L$ and denote by $i_1$, $i_2$, ..., $i_{\ell}$ the (not necessarily distinct) labels of the corresponding edges. Observe that the (consistency) conditions that guarantee that the buyers of type $i_k$ do not prefer vendor $j_{k+1}$ to vendor $j_k$ for $k \in [\ell-1]$ and buyers of type $i_{\ell}$ do not prefer vendor $j_1$ to vendor $j_{\ell}$ together with the conditions defining the directed cycle $L$ can only hold if $v_{i_k j_k}-\bar{q}_{j_k} = v_{i_{k}j_{k+1}}-\bar{q}_{j_{k+1}}$ for all $k\in [\ell-1]$ and $v_{i_{\ell} j_{\ell}}-\bar{q}_{j_{\ell}} = v_{i_{\ell}j_1}-\bar{q}_{j_1}$. Now, if there is a vendor $j'\in L$ with $\bar{q}_{j'}>c_{j'}$, let $j^*$ be the predecessor of $j'$ in $L$. Then, by negligibly decreasing $\bar{q}_{j'}$, vendor $j'$ could also attract the buyers assigned to both vendor $j^*$ and itself in $\xx$; this would strictly increase its utility, contradicting the fact that $(\xx,\bar{\qq})$ is an equilibrium.

We will also show that for every vendor $j$ that is identified as a seed vendor by the recursive step 4 of \texttt{CandidatePrice}, it is $\bar{q}_j=c_j$. Assume otherwise and consider the first vendor $j$ that is identified as seed vendor by the recursive step 4 that has $\bar{q}_j>c_j$. This means that $H$ has an edge from $j$ to some seed vendor $j'$, labelled by $i$, and, hence, $v_{ij}-c_j \leq v_{ij'}-c_{j'}$. By our assumption that $j$ is the first seed vendor with $\bar{q}_j>c_j$ identified at step 4 of the algorithm, it is $\bar{q}_{j'}=c_{j'}$. Hence, $v_{ij}-\bar{q}_j < v_{ij'}-\bar{q}_{j'}$ which contradicts the consistency of $\xx$ to~$\bar{\qq}$.

\paragraph{Non-seed vendors}
First, we will prove that for every such vendor $j$, it must be $\bar{q}_j>c_j$. Consider the subgraph $H'$ that is induced by the nodes of $H$ corresponding to non-seed vendors. By the definition of seed vendors, $H'$ is acyclic and its vendors are non-empty. We will show that the inequality holds for the leaves of $H'$ and will then use this fact to also show that it holds for their parents and, recursively, for all vendors in $H'$.

\paragraph{Case I: leaf vendors} For a leaf vendor $j$ of $H'$, its price may satisfy $\bar{q}_j=c_j$ only if there is a buyer type $i$ assigned to $j$ in $\xx$ and another vendor $j'$ so that $v_{ij}-\bar{q}_j = v_{ij'}-\bar{q}_{j'}$, as otherwise $j$ could negligibly increase its price and obtain more utility.
But then, it is $v_{ij}-c_j=v_{ij'}-\bar{q}_{j'}\leq v_{ij'}-c_{j'}$ and, hence, $H$ should also contain the directed edge from $j$ to $j'$. Since $j$ is non-seed, $j'$ should be non-seed as well and the edge should exist in $H'$, contradicting the fact that $j$ is a leaf in $H'$.

\paragraph{Case II: non-leaf vendors} Let $j^*$ be a non-seed vendor with $\bar{q}_{j^*}=c_{j^*}$ such that all its children $j$ in $H'$ have $\bar{q}_j>c_j$. Again, there is a buyer type $i$ assigned to $j^*$ in $\xx$ and another vendor $j'$ so that $v_{ij^*}-\bar{q}_{j^*} = v_{ij'}-\bar{q}_{j'}$, as, otherwise, vendor $j^*$ could increase its price above its production cost and increase its utility. Since $\bar{q}_{j^*} = c_{j^*}$ and $\bar{q}_{j'}\geq c_{j'}$, our assumptions imply that $v_{ij^*}-c_{j^*} \leq v_{ij'}-c_{j'}$. Hence, $j'$ is one of the children of $j^*$. But then, $j'$ can negligibly decrease its price to attract (in addition to all the buyers it gets in $\xx$) the buyers assigned to $j^*$ as well; this would increase the utility of vendor $j'$. This contradicts the fact that $(\xx,\bar{\qq})$ is an equilibrium and establishes that $\bar{q}_j>c_j$ for every non-seed vendor $j$.

It remains to show that for every non-seed vendor $j$, $\bar{q}_j$ is given by the same expression (\ref{eq:candidate-price}) as $p_j$. If there are no seed vendors, i.e., (the set $Z$ is empty), assume that there is a vendor $j$ with $\bar{q}_j<\min_{i:x_{ij}>0}{v_{ij}}$. Then, the only reason that may prevent vendor $j$ from increasing its price is that there is a buyer of type $i$ assigned to $j$ in $\xx$ and another vendor $j'$ such that $v_{ij}-\bar{q}_j=v_{ij'}-\bar{q}_{j'}$. Since $j'$ is a non-seed vendor, it is $\bar{q}_{j'}>c_{j'}$ and, hence, vendor $j'$ could negligibly decrease its price to attract (in addition to all the buyers it gets in $\xx$) the buyers assigned to $j$ as well; this would increase the utility of vendor $j'$.

Otherwise, if the set $Z$ of seed vendors is non-empty, we will show that every non-seed vendor $j$ has $\bar{q}_j=
p_j$. First, assume otherwise that $\bar{q}_j>p_j$. By the definition of $p_j$, this means that there is a buyer type $i$ assigned to $j$ in $\xx$ and a seed vendor $j'$ such that $\bar{q}_j > v_{ij}-(v_{ij'}-c_{j'})^+$ and subsequently (since $\bar{q}_{j'}=c_{j'}$), $v_{ij}-\bar{q}_j < (v_{ij'}-q_{j'})^+$. This contradicts the consistency of $\xx$ to $\bar{\qq}$. Finally, assume that $\bar{q}_j<p_j$. Then, the only reason that may prevent vendor $j$ from increasing its price is that there is a buyer of type $i$ assigned to $j$ in $\xx$ and another vendor $j'$ such that $v_{ij}-\bar{q}_j=v_{ij'}-\bar{q}_{j'}$. If $j'\in Z$, the last equality also implies that $\bar{q}_j = p_j$, contradicting our assumption. Hence, it must be $j'\not\in Z$, i.e., $\bar{q}_{j'}>c_{j'}$. Then, vendor $j'$ could negligibly decrease its price to attract (in addition to the buyers it gets in $\xx$) the buyers assigned to $j$ as well; this would increase the utility of vendor $j'$. This completes the proof that $\pp = \bar{\qq}$, and, hence, that $p_j = q_j$ for any non-empty vendor $j$.
\end{proof}

\subsection{Complexity of {\sc PriceCompetition}}

We now turn our attention to {\sc PriceCompetition} and first consider the cases where either the number of buyer types or the number of vendors is constant.

\begin{theorem}{\sc PriceCompetition} is solvable in polynomial time when the number of buyer types is constant.
\end{theorem}
\begin{proof}
We begin by arguing that it suffices to consider non-fractional assignments, i.e., assignments where all buyers of the same type are assigned to a single vendor. To see that, consider an equilibrium $(\xx,\pp)$ where buyers of some type $i$ are split between two vendors $j$ and $j'$. Then, it must be $p_j=c_j$ and $p_{j'}=c_{j'}$, and the assignment in which all these buyers are assigned to vendor $j$ without changing the prices is still an equilibrium. Indeed, at equilibrium the utilities of buyers of type $i$ assigned to $j$ and $j'$ should be the same. Hence, if one of the two vendors had a price strictly higher than its production cost, it could increase its utility by negligibly decreasing its price; this would result in attracting all buyers of type $i$ previously assigned to $j$ and $j'$. So, by moving all buyers of type $i$ from vendor $j'$ to vendor $j$, we still have an assignment that is consistent to $\pp$ in which the utilities of buyers and vendors do not change; observe that since $p_{j'} = c_{j'}$, it holds that $u_{j'}(\xx,\pp)=0$ and, hence, vendor $j'$ does not lose utility by this change. Clearly, the new assignment is also an equilibrium.

The theorem follows since there are at most $(m+1)^n$ different non-fractional buyers-to-vendors assignments and corresponding instances of {\sc ComputePrice} that we need to consider.
\end{proof}

The case of a constant number of vendors (this can be thought of as an {\em oligopoly}) is considerably more involved but still computable in polynomial time as we show in the following.

For a fixed price vector $\pp$, we define the induced preference, denoted by $\succ_i$, of buyers of type $i$ as $j\succ_i j'$ if $v_{ij}-p_j>v_{ij'}-p_{j'}$ and $j\simeq_i j'$ if $v_{ij}-p_j = v_{ij'}-p_{j'}$. We use the term preference profile to refer to a combination of buyer preferences. The main idea is to enumerate the different preference profiles that are defined for all price vectors $\pp\in \R^m$. Observe that the sign (from $\{-,0,+\}$) of the expression $v_{ij}-v_{ij'}-p_j+p_{j'}$ indicates whether buyers of type $i$ prefer vendor $j$ to vendor $j'$ (i.e., $j\succ_i j'$), are indifferent between the two (i.e., $j\simeq j'$), or prefer vendor $j'$ to $j$ (i.e., $j' \succ_i j$). The number of different preferences of buyers between the two specific vendors, when keeping valuations fixed, is given by the number of different sign patterns for the expressions $v_{ij}-v_{ij'}-p_j+p_{j'}$ for $i=1, ..., n$ as the difference $p_j-p_{j'}$ runs from $-\infty$ to $\infty$. Since there are at most $n$ different values of the difference $v_{ij}-v_{ij'}$, this number is at most $2n+1$; this holds since, at the beginning, all signs are positive and, as $p_j-p_{j'}$ increases, the sign of each buyer type will change, first to $0$ and then to negative. In total, the number of distinct sign patterns we need to enumerate in order to consider all distinct preference profiles is at most $(2n+1)^{{m+1\choose 2}}$ as there are ${m+1 \choose 2}$ ways to select a pair $j$, $j'$ of vendors; this is polynomial in $n$ when $m$ is constant.

When considering a preference profile $\succeq$, we compute the following assignment $\xx$ which should be given to \texttt{CandidatePrice} in order to return a price vector $\pp$; the pair $(\xx,\pp)$ will in turn be given to \texttt{Verify} to detect whether it corresponds to an equilibrium or not. For each buyer type $i$ with a unique top preference (i.e., strictly preferring a particular vendor to all others), $\xx$ assigns buyers of type $i$ to their most preferred vendor. For each buyer type $i$ that has a set $T$ of at least two vendors tied as its top preference, $\xx$ assigns $i$ to a(ny) vendor $j$ of $T$ maximizing $v_{ij}-c_j$. We call this algorithm \texttt{Enumerate}; its pseudocode appears as Algorithm \ref{alg:enumerate}.

\IncMargin{1em}
\begin{algorithm}
\SetKwInOut{Input}{input}\SetKwInOut{Output}{output}
\SetKw{CandidatePrice}{CandidatePrice}
\SetKw{Verify}{Verify}
\Input{A price competition game ${\cal G}$}
\Output{Whether ${\cal G}$ admits an equilibrium}
\BlankLine
\tcc*[h]{A preference profile is a vector of preferences, i.e., of rankings over vendors, for each buyer type}\\
\ForEach{valid preference profile $\succeq$}{
\ForEach{buyer type $i$}{
$T(i) \leftarrow$ top preference in $\succeq_i$\\
\If{$|T(i)|=1$}{$x_{iT(i)} = \mu_i$}
\Else{
$j' \leftarrow \arg\max_{j \in T(i)}\{v_{ij}-c_j\}$\\
$x_{ij'} = \mu_i$
}
}
$\pp \leftarrow \CandidatePrice({\cal G}, \xx)$\\
\Verify$({\cal G}, \xx,\pp)$\\
}
\caption{\texttt{Enumerate}\label{alg:enumerate}}
\end{algorithm}\DecMargin{1em}

Clearly, on input a price competition game that does not admit an equilibrium, \texttt{Enumerate} will not find any. The next lemma completes the proof of correctness of \texttt{Enumerate}.
\begin{lemma}\label{lem:enumerate}
On input a price competition game, \texttt{Enumerate} returns an equilibrium if one exists.
\end{lemma}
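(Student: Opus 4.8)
The plan is to show that whenever the game admits an equilibrium, \texttt{Enumerate} discovers one; combined with the remark preceding the lemma (the fact that \texttt{Verify} never reports a false positive), this establishes correctness. So fix an arbitrary equilibrium $(\xx^*,\qq)$ and let $\succeq^*$ be the preference profile induced by the prices $\qq$. Since $\qq\in\R^m$, this profile is among those enumerated by \texttt{Enumerate}, so at some iteration \texttt{Enumerate} constructs the assignment $\xx$ associated with $\succeq^*$. I would argue that $(\xx,\qq)$ is itself an equilibrium; then Lemma \ref{lem:candidate} applies to it and guarantees that \texttt{CandidatePrice}$(\xx)$ returns the price vector $\pp$ with $p_j=q_j$ for every non-empty vendor and $p_j=c_j$ for every empty vendor. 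As in the first paragraph of the proof of Lemma \ref{lem:candidate}, this $\pp$ coincides with the price vector obtained from $\qq$ by lowering empty vendors to their cost, and that price vector forms an equilibrium with $\xx$, so \texttt{Verify} returns YES on $(\xx,\pp)$.

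The heart of the argument is therefore to prove that the specific assignment $\xx$ produced from $\succeq^*$, paired with the original prices $\qq$, is an equilibrium. I would first record that $\xx$ is consistent to $\qq$: a buyer type $i$ with a unique top vendor under $\succeq^*$ is assigned by \texttt{Enumerate} to that vendor, which is its entire demand set $D_i(\qq)$, while a tied type $i$ is assigned to a vendor of its tied set $T_i=D_i(\qq)$; in both cases the recipient lies in $D_i(\qq)$. The key remaining claim is that $\xx$ leaves every vendor's utility unchanged, i.e., $u_j(\xx,\qq)=u_j(\xx^*,\qq)$ for all $j$. Granting this, note that the best-response value of each vendor $j$ against the fixed prices $\qq_{-j}$ --- the supremum of $u_j(\yy,(p'_j,\qq_{-j}))$ over $p'_j\ge c_j$ and consistent $\yy$ --- depends only on $\qq_{-j}$ and not on the current assignment; since $(\xx^*,\qq)$ is an equilibrium this supremum is at most $u_j(\xx^*,\qq)=u_j(\xx,\qq)$, so no vendor can profitably deviate from $(\xx,\qq)$ either.

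To prove the utility-invariance I would compare $\xx$ and $\xx^*$ type by type. For a unique-top type, consistency already forces $\xx^*$ to place all of type $i$ on the same vendor \texttt{Enumerate} uses, so they agree. For a tied type $i$ with tied set $T_i$, write $U_i$ for the common buyer utility $v_{ij}-q_j$ over $j\in T_i$; then $v_{ij}-c_j=U_i+(q_j-c_j)$ for $j\in T_i$, so \texttt{Enumerate}'s choice $j^*_i$, the maximizer of $v_{ij}-c_j$, is exactly the maximizer of the margin $q_j-c_j$ over $T_i$. If this maximum margin is zero, every vendor in $T_i$ has margin zero, hence type $i$ contributes zero to every vendor's utility in both $\xx$ and $\xx^*$, and moving it onto $j^*_i$ changes nothing. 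The only delicate case is a strictly positive maximum margin: here I would show by a deviation argument that $\xx^*$ must already place all of type $i$ on $j^*_i$. Indeed, if $j^*_i$ held only part of type $i$, it could drop its price by a tiny $\epsilon$, break the tie so as to capture all of type $i$ while retaining its other buyers, and gain roughly $\mu_i(q_{j^*_i}-c_{j^*_i})>0$ at a cost of only $O(\epsilon)$ --- contradicting that $(\xx^*,\qq)$ is an equilibrium. Hence $\xx$ and $\xx^*$ agree on type $i$ in this case as well, and the utility-invariance follows for every vendor.

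The step I expect to be the main obstacle is precisely this last deviation argument for tied types with positive margin: one has to check that lowering $j^*_i$'s price does not lose any of its existing buyers (it does not, since a lower price only makes $j^*_i$ more attractive), that the captured extra volume $\mu_i-x^*_{i,j^*_i}$ is strictly positive, and that the first-order gain from this volume dominates the $O(\epsilon)$ loss on the margin of already-held buyers. A subtlety worth isolating is that this same argument rules out two distinct vendors of $T_i$ simultaneously attaining a positive maximum margin, which is exactly what makes \texttt{Enumerate}'s otherwise arbitrary tie-breaking among margin-maximizers harmless.
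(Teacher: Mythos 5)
Your proposal is correct and takes essentially the same route as the paper's proof: both fix the equilibrium prices $\qq$, show that the assignment \texttt{Enumerate} builds for the preference profile induced by $\qq$ still forms an equilibrium with $\qq$, using the identical negligible-price-decrease deviation argument (the paper phrases the key step as its contrapositive: if a buyer type ends up split across two different vendors in the two assignments, both vendors must have zero margin), and then conclude via Lemma \ref{lem:candidate}. Your additional observations --- uniqueness of a positive-margin maximizer in a tied set, and utility-invariance implying the best-response benchmark depends only on $\qq_{-j}$ --- are correct refinements of that same argument rather than a different approach.
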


\begin{proof}
Assume that \texttt{Enumerate} is applied on input a price competition game ${\cal G}$ that admits an equilibrium $(\xx,\qq)$. If $\xx$ is the unique assignment that is consistent to $\qq$, \texttt{Enumerate} will consider the preference profile $\succeq$ corresponding to vector $\qq$ and will pass the uniquely defined assignment $\xx$ as input to \texttt{CandidatePrice} to compute a price vector $\pp$; by Lemma \ref{lem:candidate}, $(\xx,\pp)$ will form an equilibrium of ${\cal G}$.

Now, assume that $\xx$ is not the unique assignment that is consistent to $\qq$. Denote by $\bar{\xx}$ the  assignment computed by \texttt{Enumerate} (notice that $\bar{\xx}$ is consistent to $\qq$ as well) when considering the preference profile that corresponds to the price vector $\qq$. We will show that $(\bar{\xx},\qq)$ is an equilibrium as well; then, Lemma \ref{lem:candidate} guarantees that an equilibrium will be found when \texttt{Enumerate} will run \texttt{CandidatePrice} with input assignment $\bar{\xx}$.

Consider a buyer type $i$ with $x_{ij}>0$ and $\bar{x}_{ij'}>0$ for two different vendors $j$ and $j'$. By our assumptions, we have $v_{ij}-q_j=v_{ij'}-q_{j'}$ (since buyers of type $i$ are indifferent between vendors $j$ and $j'$ in $\qq$) and $v_{ij}-c_j \leq v_{ij'}-c_{j'}$ (since \texttt{Enumerate} sets $\bar{x}_{ij'}>0$). We will show that $q_j=c_j$ and $q_{j'}=c_{j'}$. Indeed, assume that $q_{j'}> c_{j'}$. Then, by negligibly decreasing its price in $\qq$, vendor $j'$ could increase its utility by attracting (in addition to the buyers it gets in $\xx$) all the buyers of type $i$. Hence, $q_{j'}=c_{j'}$ and $v_{ij}-q_j=v_{ij'}-c_{j'}$. By the inequality $v_{ij}-c_j\leq v_{ij'}-c_{j'}$, we obtain that $q_j = c_j$. The lemma follows since the different assignments of buyers in $\xx$ and $\bar{\xx}$ does not affect the utility of the corresponding vendors (which is zero).
\end{proof}

By the discussion above and Lemma \ref{lem:enumerate}, we obtain the following.
\begin{theorem}{\sc PriceCompetition} is solvable in polynomial time when the number of vendors is constant.
\end{theorem}

The restrictions on the numbers of vendors or buyer types are necessary in order to come up with efficient algorithms for {\sc PriceCompetition} (unless $\mbox{P}=\mbox{NP}$).

\begin{theorem}\label{thm:pc-hardness}
{\sc PriceCompetition} is NP-hard.
\end{theorem}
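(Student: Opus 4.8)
The plan is to reduce from a standard NP-complete problem such as {\sc 3SAT}. First I would note that, by Lemma~\ref{lem:candidate} together with the discussion surrounding \texttt{CandidatePrice} and \texttt{Verify}, an equilibrium (if one exists) is essentially pinned down by a discrete object: a non-fractional buyers-to-vendors assignment determines a unique candidate price vector on its non-empty vendors, which can then be checked in polynomial time. Hence the existence question really asks whether some discrete assignment can be supported as an equilibrium, and I would use this to view the reduction as encoding a satisfying assignment by a globally consistent choice of ``stable states'' across local gadgets. The qualitative mechanism I would exploit is exactly the cyclic undercutting phenomenon isolated in Lemma~\ref{lem:existence2}: a small sub-game can be arranged to possess an equilibrium precisely when a certain local condition holds, and to inherit the Lemma~\ref{lem:existence2} instability (so that no consistent price vector survives a profitable undercut) when that condition fails.

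Concretely, for every boolean variable I would build a \emph{variable gadget}, a constant-size collection of vendors and buyer types whose only equilibria come in two flavours, encoding the truth value of the variable (the two flavours being analogous to the two symmetric ways the ``left/right'' competition of Lemma~\ref{lem:existence2} would be resolved if it were broken by a slight asymmetry). For every clause I would build a \emph{clause gadget} wired to the gadgets of its three literals through shared buyer types and carefully chosen valuations $v_{ij}$ and production costs $c_j$, so that the clause gadget admits a stabilising price configuration if and only if at least one incident literal is in its satisfying state; if all three literals are in their falsifying states, the clause gadget closes an undercutting cycle of the Lemma~\ref{lem:existence2} type and thereby destroys every equilibrium of the whole game.

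For correctness I would argue the two directions separately. For the forward direction, given a satisfying truth assignment I would exhibit an explicit assignment $\xx$ together with the price vector produced by \texttt{CandidatePrice} and verify, gadget by gadget, that no vendor has a profitable unilateral deviation; the satisfied clauses guarantee that each clause gadget can be placed in its stable configuration. For the reverse direction, from any equilibrium $(\xx,\pp)$ I would read off a truth assignment from the state of each variable gadget and show that an unsatisfied clause would force its clause gadget into the unstable configuration, contradicting the equilibrium condition exactly as in the proof of Lemma~\ref{lem:existence2}.

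The hard part will be the gadget engineering rather than the high-level reduction: I must choose the valuations and costs so that (i) each variable gadget is genuinely bistable and cannot collapse into an unintended equilibrium, (ii) the instability triggered by an unsatisfied clause is \emph{localized}, i.e.\ it cannot be absorbed or ``repaired'' by re-pricing elsewhere in the game, and (iii) distinct gadgets do not create spurious cross-gadget undercutting cycles that would rule out equilibria even for satisfiable instances. Because prices range over a continuum while the intended states are discrete, the delicate step is pinning the gadgets down with strict valuation gaps (as with the gaps of size $2$ in Lemma~\ref{lem:existence2}) so that the only surviving equilibria are exactly those corresponding to consistent truth assignments; once that is achieved, the membership arguments via \texttt{CandidatePrice} and \texttt{Verify} make both directions routine.
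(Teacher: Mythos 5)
Your high-level strategy is genuinely the right one, and it matches the paper's in spirit: the paper also builds a reduction in which copies of the unstable two-vendor game of Lemma~\ref{lem:existence2} are embedded so that each copy is ``defused'' exactly when a combinatorial condition holds, and ``bistable'' vendors encode the combinatorial choice. However, what you have written is a plan, not a proof, and the gap is exactly where you place it yourself: the gadgets are never constructed. No valuations, volumes, or costs are exhibited; no argument is given that a variable gadget is bistable, that an unsatisfied clause's instability cannot be repaired by re-pricing elsewhere, or that satisfiable instances do not accidentally contain cross-gadget undercutting cycles. Conditions (i)--(iii) in your last paragraph are not ``engineering details''---they are the entire technical content of the theorem, and without them neither direction of the correctness argument can even be stated, let alone checked. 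In particular, bistability is delicate in this model: a vendor's two intended price levels must yield \emph{exactly equal} utility (otherwise one state disappears as an equilibrium), and pinning down two-state behaviour for a vendor facing several buyer types requires calibrating volumes and valuations so that every intermediate price is strictly dominated. Nothing in your sketch guarantees this is achievable for a variable gadget that must additionally interact with several clause gadgets.

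For comparison, the paper sidesteps the variable/clause wiring entirely by reducing from {\sc Exact-3-Cover} rather than {\sc 3SAT}. There, each set $S$ gets a single ``set vendor'' with two dedicated buyer types---one of volume $3$ and valuation $30$, one of volume $9$ and valuation $6$---so that pricing at $30$ or at $6$ both give utility exactly $90$ and every other price is strictly worse; this is the bistable choice ``$S$ in the cover / not in the cover.'' Each element $e$ gets two vendors and two buyer types that reproduce \emph{verbatim} the $5,3/3,5$ game of Lemma~\ref{lem:existence2}, except that the element buyers also value the set vendors containing $e$ at $12$: if some set containing $e$ prices at $6$, those buyers get utility $6>5$ there and withdraw from the element pair, leaving a stable one-buyer-type game; otherwise both buyer types remain in play and no equilibrium exists. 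Disjointness of the chosen sets is forced by the utility-$90$ calibration (overlapping low-priced sets would lose a shared element buyer and drop below $90$). This is precisely your mechanism, but instantiated with concrete numbers and with localization obtained for free because each element's pair of vendors is valued only by its own two buyer types. If you want to complete your own 3SAT route, you would need to supply gadgets meeting your conditions (i)--(iii) at the same level of concreteness; until then the proposal cannot be judged correct.
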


\begin{proof}
We use a reduction from the well-known NP-hard {\sc Exact-$3$-Cover} problem (X$3$C) which is formally described as follows.
\begin{quote}
{\sc Exact-3-Cover}: Given a universe $\UU$ of $n=3q$ elements and a collection $\CC$ of $m$ $3$-sized subsets of $\UU$, is there a cover $C\subseteq \CC$ consisting of $q$ disjoint sets?
\end{quote}
Given an instance of X$3$C we describe an instance of {\sc PriceCompetition} as follows; see also Table~\ref{tab:pc-hardness-instance}:
\begin{itemize}
\item For every subset $S \in \CC$, there is a set-vendor $\nu_S$;
\item for every element $e \in \UU$, there are two vendors $\nu_{e,1}$ and $\nu_{e,2}$;
\item for every element $e \in \UU$, there is an element-buyer type $b_e$ with volume $1$, and valuations $12$ for every vendor $\nu_S$ such that $e \in S$, $5$ for vendor $\nu_{e,1}$, and $3$ for vendor $\nu_{e,2}$;
\item for every element $e \in \UU$, there is a buyer type $b^*_{e}$ with volume $1$ and valuation $3$ for vendor $\nu_{e,1}$, and $5$ for vendor $\nu_{e,2}$;
\item for every subset $S \in \CC$, there are two buyer types: one buyer type $b_{S,h}$ with volume $3$ and valuation $30$ for vendor $\nu_S$, and one buyer type  $b_{S,\ell}$ with volume $9$ and valuation $6$ for vendor $\nu_S$;
\item all valuations not mentioned above and all production costs are zero.
\end{itemize}

\begin{table}[htbp]
\centerline{
\begin{tabular}{l|c c c}
  & vn $\nu_S$& vn $\nu_{e,1}$& vn $\nu_{e,2}$\\[-0.5em]
  & $c_{\nu_S}=0$ & $c_{\nu_{e,1}}=0$ & $c_{\nu_{e,2}}=0$ \\\hline
  bt $b_{e}$: $\mu_{b_{e}}=1$ & $12$ & $5$ & $3$ \\
  bt $b^*_{e}$: $\mu_{b^*_{e}} = 1$ & $0$  & $3$ & $5$ \\
  bt $b_{S,h}$: $\mu_{b_{S,h}} = 3$ & $30$ & $0$ & $0$ \\
  bt $b_{S,\ell}$: $\mu_{b_{S,\ell}} = 9$ & $6$  & $0$ & $0$ \\
\end{tabular}}
  \caption{Part of the instance in the reduction from {\sc Exact-$3$-Cover}.}
\label{tab:pc-hardness-instance}
\end{table}

First, observe that in any equilibrium $(\xx,\pp)$, a vendor $\nu_S$ corresponding to the set $S=\{e_1,e_2,e_3\}$ of ${\cal C}$ may have been assigned
\begin{itemize}
\item either all buyers of type $b_{S,h}$ by setting its price $p_{\nu_S}$ to $30$, i.e., equal to the valuation $v_{b_{S,h},\nu_S}$;
\item or all buyers of types $b_{S,h}$, $b_{e_1}$, $b_{e_2}$, $b_{e_3}$, and $b_{S,\ell}$ by setting its price to $6$, i.e., equal to the valuation $v_{b_{S,\ell},\nu_S}$.
\end{itemize}
To see why the above claim is true, observe that, in any of the two cases, the utility of vendor $\nu_S$ at equilibrium is $90$. Clearly, by decreasing its price so that it lies in the ranges $(12, 30)$ or $(0,6)$, the utility of the vendor becomes strictly smaller than $90$ since any consistent assignment will belong to one of the two cases. By setting the price in the range $(6,12]$, no consistent assignment includes the buyers of type $b_{S,\ell}$ and, hence, the utility of vendor $\nu_S$ is at most $72$.

Now, assume that the initial X3C instance is a YES instance with a cover $C$ consisting of $q$ $3$-sets from ${\cal C}$. Then, consider the price vector $\pp$ with $p_{\nu_S} = v_{b_{S,\ell},\nu_S}=6$ for every set $S\in C$, $p_{\nu_S} = v_{b_{S,h},\nu_S}=30$ for every set $S\not\in C$, and $p_{\nu_{e,1}}=0$ and $p_{\nu_{e,2}}=2$ for every $e\in \UU$. Let $\xx$ be the consistent assignment that assigns each element buyer of type $e$ to the set vendor $\nu_S$ such that $S\in C$ and $e\in S$, assigns the buyers of type $b_{S,h}$ to vendor $\nu_S$ for every set $S\in {\cal C}$, assigns the buyers of type $b_{S,\ell}$ to
vendor $\nu_S$ for every set $S\in C$ while the buyers of type $b_{S,\ell}$ abstain for every set $S\not\in C$, and assigns the buyers of type $b^*_{e}$ to vendor $\nu_{e,2}$. We now argue that $(\xx,\pp)$ is an equilibrium. First, observe that all set vendors follow one of the two strategies mentioned above and are at equilibrium. Now, for every element $e\in \UU$, the vendors $\nu_{e,1}$ and $\nu_{e,2}$ are essentially engaged in a two-vendor game competing only for buyers of type $b^*_e$ since the buyers of type $b_e$ have a utility of $6$ at vendor $\nu_S$ with $e\in S$ and their deviation to a vendor among $\nu_{e,1}$ and $\nu_{e,2}$ would never increase their utility to above $5$. The prices $p_{\nu_{e,1}}=0$ and $p_{\nu_{e,2}}=2$ form an equilibrium in this two-vendor game.  Indeed, the utility of vendor $\nu_{e,1}$ is zero (since it is not assigned any buyers according to $\xx$) and its price equals its production cost, which means that this vendor cannot gain any utility by deviating. Moreover, the utility of vendor $\nu_{e,2}$ is $2$ and it cannot be increased (since the buyers of type $b_{e^*}$ would then go to vendor $\nu_{e,1}$ or abstain) or decreased (since the vendor would then have less utility).

On the other hand, if the initial X3C instance is a NO instance, we claim that no assignment $\xx$ of buyers to vendors can be consistent to a price vector $\pp$ so that $(\xx,\pp)$ is an equilibrium. Assume otherwise and consider all set vendors. By the discussion above, each set vendor has set its price either to $30$ or to $6$ at equilibrium. We remark that there exists at least one element-buyer type $b_e$ that has no buyers assigned to any set vendor $\nu_S$ that corresponds to a subset $S$ containing element $e \in \UU$; this holds since the initial X3C instance is a NO instance. Therefore, all these set vendors, that are candidate vendors for buyers of type $b_e$, have set their price to exactly $30$ in the equilibrium and, so, the buyers of type $b_e$ can only be assigned to vendors $\nu_{e,1}$ or $\nu_{e,2}$. Hence, the vendors $\nu_{e,1}$ and $\nu_{e,2}$ can compete not only for the buyers of type $b_e^*$ but also for the buyers of type $b_e$. But then, observe that these two vendors and the buyers of type $b_e$ and $b_e^*$ would follow the circular dynamics in the counter-example of Lemma \ref{lem:existence2}; this implies that no equilibrium exists and the proof is complete.
\end{proof}

\section{Enforcing equilibria using subsidies}\label{sec:subsidies}

We now consider the option to use subsidies. A subsidy given to a vendor aims to compensate it for setting its price at a particular value. In this way, subsidies can be used to enforce a particular pair of price vector and consistent buyers-to-vendors assignment. Formally, given a price vector $\pp$ and a consistent assignment $\xx$, denote by $\theta_j(\xx,\pp)$ the maximum utility of vendor $j$ over all deviations $p'_j$ and all assignments $\yy$ that are consistent to $(p'_j,\pp_{-j})$. Vendor $j$ has no incentive to follow any such deviation when it is given an amount of subsidies $s_j\geq \theta_j(\xx,\pp)-u_j(\xx,\pp)$. If this inequality holds for every vendor $j$, we say that the pair $(\xx,\pp)$ is enforced as an equilibrium. We denote by $\s(\xx,\pp)$ the entry-wise minimum subsidy vector that enforces $(\xx,\pp)$ as an equilibrium, i.e., $s_j(\xx,\pp) = \theta_j(\xx,\pp)-u_j(\xx,\pp)$. We use the terms ``total amount'' and ``cost'' to refer to the sum of all entries of a subsidy vector.

Our first observation is that a large amount of subsidies may inherently be necessary in order to enforce {\em any} equilibrium.

\begin{theorem}
For every $\delta>0$, there exists a price competition game, in which no subsidy assignment of cost smaller than $(1/4-\delta)\SW^*$ can enforce any pair of price vector and consistent buyers-to-vendors assignment as an equilibrium.
\end{theorem}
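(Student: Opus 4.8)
The plan is to exhibit a small symmetric gadget in which the no-equilibrium ``undercutting'' tension of Lemma \ref{lem:existence2} is present but cannot be escaped to a cheap configuration. Concretely, I would take two vendors with zero production cost and two buyer types of unit volume, ``left'' and ``right,'' with $v_{\ell\ell}=v_{rr}=L$ and $v_{\ell r}=v_{r\ell}=\ell$, where the ratio $\ell/L$ is tuned close to the critical value $3/4$ (for instance $L=4,\ell=3$); driving $\ell/L\to 3/4$ through a family of games is what produces the ``for every $\delta$'' statement. Here $\SW^*=2L$, attained by sending each buyer type to its preferred vendor. The goal is then to show that \emph{every} consistent pair $(\xx,\pp)$ requires a subsidy vector $\s(\xx,\pp)$ whose entries sum to at least $(1/4-\delta)\SW^*$.

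I would first reduce the space of configurations. Using the argument from the first paragraph of the proof of Theorem \ref{thm:poa-upper}, it suffices to consider integral assignments, so each buyer type is sent to a single vendor or abstains. This leaves a handful of assignment patterns: the ``diagonal'' (optimal) assignment, the two ``both-to-one-vendor'' assignments, the ``swapped'' assignment, and the partial-abstain variants. For each pattern I would write down the consistency constraints on $(p_\ell,p_r)$ and, for each vendor $j$, compute the best-deviation utility $\theta_j(\xx,\pp)$ as the maximum of two natural moves: (i) \emph{raising} the price to the largest value that retains its currently assigned buyer(s), an amount capped by the valuation $L$; and (ii) \emph{undercutting} to a price just below the competitor, thereby grabbing \emph{both} buyer types for a volume-$2$ payoff. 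The required subsidy for the pattern is $\sum_j\bigl(\theta_j(\xx,\pp)-u_j(\xx,\pp)\bigr)$ minimized over the feasible prices.

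The heart of the argument is the diagonal case. With symmetric prices $p_\ell=p_r=p$ and $d:=L-\ell$, the regret of each vendor is $\max\{\min(L,p+d),\,2(p-d)^+\}-p$, and minimizing this over $p$ yields a minimax value equal to $\SW^*/4$ precisely when $L=4d$: the raise-deviation (capped at $L$) and the grab-both deviation exactly balance at $p=3d$, each giving utility $4d$ against a current utility of $3d$. I would then show that asymmetric diagonal prices only increase this, and that every other assignment pattern is at least as expensive --- for example, in a both-to-one-vendor pattern the idle vendor has regret at least $d$ from reclaiming its own buyer, while the overloaded vendor (serving both at a low price) has large regret from raising its price to serve only its own buyer, so the two regrets together already exceed $\SW^*/4$. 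Taking the worst case over all patterns yields the stated bound.

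I expect the main obstacle to be the bookkeeping of the case analysis: correctly identifying the binding deviation $\theta_j$ in each price regime (the interplay between the valuation cap $L$ on the raise move and the factor-$2$ volume of the grab-both move is exactly what pins the constant at $1/4$), and verifying that no asymmetric or partially-abstaining configuration slips below the threshold. The supremum nature of the undercutting deviation (``a price negligibly smaller'') is what the slack $\delta$ absorbs, so stating the bound as $(1/4-\delta)\SW^*$ lets me avoid delicate boundary arguments and instead lower-bound each regret crudely in every case.
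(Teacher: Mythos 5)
Your proposal is correct and follows essentially the same route as the paper: the paper's instance is exactly your gadget at $L=4-\epsilon$, $\ell=3$ (so $\ell/L\to 3/4$), and its proof computes each vendor's regret as the maximum of the valuation-capped price raise and the undercut-to-grab-both deviation, exactly as in your inequality (\ref{eq:subsidies})-style bound, then performs the same exhaustive case analysis over price/assignment configurations to show every consistent pair needs total subsidies about $2$ against $\SW^*\approx 8$. The only cosmetic difference is that the paper perturbs the valuations by $\epsilon$ rather than keeping $L=4,\ell=3$ exact and letting $\delta$ absorb the boundary effects of the supremum-type deviations.
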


\begin{proof}
Consider a price competition game with two buyer types of unit volume each and two vendors with a production cost of $0$. We use the terms left and right to refer to the vendors and buyer types. Let $\epsilon>0$; then the valuations are $v_{\ell\ell}=v_{rr}=4-\epsilon$ and $v_{r\ell}=v_{\ell r}=3$. We will show that no price vector and consistent buyers-to-vendors assignment can be enforced as an equilibrium using an amount of subsidies less than $2-2\epsilon$. The optimal social welfare is $8-2\epsilon$ and the theorem will follow by setting $\epsilon$ to a sufficiently small positive value.

Consider a buyers-to-vendors assignment $\xx$ and let $p_\ell$ and $p_r$ be the two prices to which $\xx$ is consistent. First observe that the left vendor will attract buyers of both types by setting its price to (any value that is negligibly smaller than) $\min\{3,p_r-1+\epsilon\}$ (if positive) while it can attract the buyers of left type by setting its price to (any value that is negligibly smaller than) $\min\{4-\epsilon,1-\epsilon+p_r\}$; the two terms in the $\min$ expression denote the price necessary so that the buyers do not abstain and prefer the left vendor to the right one. Hence, by expressing the subsidies in terms of the maximum utility of the vendors at any deviation minus their current utility we get an amount $\beta$ of subsidies that is
\begin{align}\nonumber
\beta &\geq  \max\{\min\{6,2p_r-2+2\epsilon\},\min\{4-\epsilon,1-\epsilon+p_r\}\}-(x_{\ell\ell}+x_{r\ell})p_\ell\\\label{eq:subsidies}
&\quad +\max\{\min\{6,2p_\ell-2+2\epsilon\},\min\{4-\epsilon,1-\epsilon+p_\ell\}\}-(x_{\ell r}+x_{rr})p_r.
\end{align}
We will distinguish between cases (omitting symmetric ones) depending on the price values $p_\ell$ and $p_r$ and will use (\ref{eq:subsidies}) to show that $\beta \geq 2-2\epsilon$ in any case.
\smallskip

\noindent{\em Case 1: $p_\ell,p_r>4-\epsilon$.} All buyers abstain and both $\max$ expressions evaluate to at least $6$.
\smallskip

\noindent{\em Case 2: $p_\ell>4-\epsilon$ and $3<p_r\leq 4-\epsilon$.} The left buyers abstain and the right buyers may either abstain or be assigned to the right vendor. By just considering the last two terms of (\ref{eq:subsidies}), we get $\beta\geq 6-p_r\geq 2+\epsilon$.
\smallskip

\noindent{\em Case 3: $p_\ell>4-\epsilon$ and $p_r\leq 3$.} The right vendor may attract all buyers and the amount of subsidies in it is at least $6-2p_r$. The amount of subsidies at the left vendor is then at least $2p_r-2+2\epsilon$. In total, $\beta\geq 4+2\epsilon$.
\smallskip

\noindent{\em Case 4: $3<p_\ell,p_r\leq 4-\epsilon$.} The left (respectively, right) buyers can only be assigned to the left (respectively, right) vendor. Since the $\max$ expressions of (\ref{eq:subsidies}) evaluate to $2p_r-2+2\epsilon$ and $2p_\ell-2+2\epsilon$, we obtain $\beta \geq  p_r+p_\ell-4+4\epsilon\geq$ $2+4\epsilon$.
\smallskip

\noindent{\em Case 5: $3<p_\ell\leq 4-\epsilon$ and $p_r\leq 3$.} The right hand side of (\ref{eq:subsidies}) is minimized when buyers of different types are assigned to different vendors. The $\max$ expressions are at least $1-\epsilon+p_r$ and at least $2p_\ell-2+2\epsilon$. Altogether, we get $\beta\geq p_{\ell}-1+\epsilon\geq 2+\epsilon$.
\smallskip

\noindent{\em Case 6: $p_\ell,p_r\leq 3$.} We have to distinguish between a few subcases. First, consider the subcase where each vendor receives the buyers of a different type. Clearly, the $\max$ expressions evaluate to at least $1-\epsilon+p_r$ and $1-\epsilon+p_\ell$; altogether, we get $\beta\geq 2-2\epsilon$. If, in contrast, buyers of both types are assigned to the same (say, left) vendor, this should be due to the fact that $p_r\geq 1-\epsilon+p_\ell$ (so that the right buyer prefers the left vendor). Then, the first $\max$ expression is at least $1-\epsilon+p_r\geq 2-2\epsilon+p_\ell$ while the second one is at least $1-\epsilon+p_\ell$. Overall, we have that $\beta\geq 3-3\epsilon$.
\end{proof}

In the following, we restrict ourselves to optimal assignments and show tight bounds on the cost of subsidies that are necessary and sufficient to enforce these assignments as equilibria.
\begin{theorem}
In every price competition game, the optimal assignment can be enforced as an equilibrium using an amount of subsidies that is at most $\SW^*$. This bound is tight. In particular, for every $\epsilon>0$, there exists a price competition game in which the optimal assignment cannot be enforced as an equilibrium with a total amount of subsidies less than $(1-\epsilon)\SW^*$.
\end{theorem}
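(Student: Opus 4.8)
The plan is to prove the two halves separately. For the upper bound I would fix a social welfare-maximizing assignment $\xx^*$, which sends each buyer type $i$ to a vendor $o(i)\in\arg\max_{j}\{v_{ij}-c_j\}$, and enforce it with the simplest conceivable price vector, namely $p_j=c_j$ for every vendor $j$. First I would check consistency: since $D_i(\pp)=\arg\max_{j}\{v_{ij}-c_j\}$ under these prices, every $o(i)$ lies in the demand set, so $\xx^*$ is consistent to $\pp$. Because each vendor prices at cost, its current utility is $u_j(\xx^*,\pp)=0$, and hence its minimum required subsidy is exactly $s_j=\theta_j(\xx^*,\pp)$, its best attainable deviation profit. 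The entire upper bound thus reduces to the single inequality $\sum_j\theta_j(\xx^*,\pp)\le\SW^*$.

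To establish that inequality I would charge each vendor's deviation profit to the buyer types it attracts. For a deviation of vendor $j$ to price $p'_j$, let $w_i=\max_{j'\neq j}\{v_{ij'}-c_{j'},0\}$ denote the best alternative utility (including abstention) of type $i$ when all competitors price at cost; then $j$ attracts precisely the types with $v_{ij}-p'_j\ge w_i$, and its optimal deviation earns $\theta_j=(p^*_j-c_j)\sum_{i\in A_j}\mu_i$ over the attracted set $A_j$ at optimal price $p^*_j$. The crucial observation is that whenever $i\in A_j$ and $p^*_j>c_j$, we have $v_{ij}-c_j>v_{ij}-p^*_j\ge w_i\ge v_{ij'}-c_{j'}$ for every $j'\neq j$, so $j$ is the unique maximizer of $v_{i\cdot}-c_\cdot$, i.e. $j=o(i)$. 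Consequently only the optimal vendor $o(i)$ can extract positive profit from type $i$, and there the per-unit profit obeys $p^*_{o(i)}-c_{o(i)}\le v_{i,o(i)}-w_i-c_{o(i)}\le v_{i,o(i)}-c_{o(i)}$. Swapping the order of summation over vendors and types then yields $\sum_j\theta_j=\sum_i\mu_i\sum_{j:i\in A_j}(p^*_j-c_j)\le\sum_i\mu_i(v_{i,o(i)}-c_{o(i)})=\SW^*$. I expect this charging step --- in particular the claim that, once competitors price at cost, a vendor can profit only from the very types it optimally serves --- to be the main conceptual obstacle.

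For tightness I would exhibit a one-vendor, two-type instance with production cost $0$: a ``cheap'' type of small volume $\eta$ and valuation $1$, and a ``rich'' type of volume $1$ and large valuation $V$. The optimal assignment serves both types, so $\SW^*=\eta+V$. Consistency of this assignment forces $p\le 1$ (the cheap buyer must not abstain), capping the monopolist's utility at $u=p(1+\eta)\le 1+\eta$, whereas its best deviation --- dropping the cheap type and charging $V$ to the rich one --- gives $\theta=V$ for $V>1+\eta$. Since $\theta$ does not depend on the chosen consistent price, the required subsidy $\theta-u$ is minimized by taking $p=1$, so the minimum total subsidy enforcing the optimum is $V-1-\eta$. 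Letting $\eta\to 0$ and $V\to\infty$ pushes the ratio $(V-1-\eta)/(V+\eta)$ above $1-\epsilon$, completing the lower bound. The only point needing care here is that no consistent price can simultaneously raise the vendor's utility and lower the subsidy below the target, which holds because the cheap type pins the price at $1$ while the deviation value stays at $V$; this captures the classical monopolist-versus-price-discrimination tension driving the subsidies.
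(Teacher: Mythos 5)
Your proposal is correct and follows essentially the same route as the paper: the upper bound prices every vendor at cost under the optimal assignment and charges each vendor's best deviation profit to the buyer types it serves (you merely spell out the charging step the paper asserts in one line), and the lower bound is the same single-vendor monopoly construction with one low-valuation and one high-valuation type (the paper uses unit volumes with valuations $1$ and $\chi$; your volumes $\eta$ and $1$ with valuations $1$ and $V$ are an inessential variation).
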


\begin{proof}
We first prove the upper bound. Consider an optimal assignment $\xx$ with $x_{ij}\in\{0,\mu_i\}$ for every buyer type/vendor pair and the price vector $\pp$ with $p_j=c_j$; clearly, $\xx$ is consistent to $\pp$. By deviating to any other price, vendor $j$ cannot get any buyers that are not assigned to it in the optimal assignment; it can, however, increase its utility by raising the price for those buyers that are already assigned to it. Hence, it suffices to assign a subsidy of $s_j(\xx,\pp)=\sum_{i}{x_{ij}(v_{ij}-c_j)}$ to each vendor $j$; this obviously yields a total amount of subsidies equal to $\SW^*$ that enforces the optimal assignment as an equilibrium, since each vendor obtains the maximum utility from its assigned buyers.

For the lower bound, let $\chi>2$ and consider the price competition game with two buyer types of unit volume and valuations $\chi$ and $1$ for a single vendor of production cost of $0$. The optimal social welfare is $\SW^*=\chi+1$. Observe that the utility of the vendor is maximized to $\chi$ by setting its price to $\chi$ while any price that is consistent to assigning both buyer types to the vendor is at most $1$ for a vendor utility of at most $2$. Hence, the amount of subsidies required to enforce the optimal assignment as an equilibrium is at least $\chi-2$ which becomes at least $(1-\epsilon)\SW^*$ by setting $\chi$ sufficiently large.
\end{proof}

Even though the minimum amount of subsidies that is sufficient to enforce the optimal assignment as an equilibrium can be large in terms of the optimal social welfare, one might hope that it could be efficiently computable. Unfortunately, this is far from true as we show below in Theorem~\ref{thm:subsidies-apx}. Before presenting the theorem, let us formally define the corresponding optimization problem:

\begin{quote}
{\sc MinSubsidies}: Given a price competition game ${\cal G}$ with an optimal assignment $\xx$, compute a price vector $\pp$ that minimizes the cost $\s(\xx,\pp)$ over all price vectors to which $\xx$ is consistent.
\end{quote}

Notice that {\sc MinSubsidies} should return an equilibrium $(\xx,\pp)$ when one exists. This can be efficiently decided using algorithm \texttt{CandidatePrice}.
The hardness of the problem manifests itself in instances that do not admit equilibria. It is not hard to show that, in this case, \texttt{CandidatePrice} does not minimize the amount of subsidies required. This is shown in the following example.

\begin{example}\label{ex:lemma4} Consider the price competition game in the proof of Lemma \ref{lem:existence2}; we repeat it in the following table.

\begin{table}[htbp]
\centerline{
\begin{tabular}{l|c c }
  & vn $\ell$& vn $r$\\[-0.5em]
  & $c_\ell = 0$ & $c_r = 0$\\\hline
  bt $\ell$: $\mu_\ell =1$ & $5$ & $3$\\
  bt $r$: $\mu_r =1$  & $3$ & $5$ \\
\end{tabular}}
  \caption{The price competition game in the proof of Lemma \ref{lem:existence2}.}\label{tab:example3}
  \end{table}

Consider the allocation $\xx$ where all buyers of type $\ell$ are assigned to vendor $\ell$ and all buyers of type $r$ are assigned to vendor $r$. By applying algorithm \texttt{CandidatePrice}, we obtain the price vector $\pp = (5, 5)$ and the corresponding subsidy vector $\s(\xx, \pp) = (1,1)$; this amount of subsidies is required so that no vendor has incentive to lower the price to $3-\epsilon$ and obtain all buyers  and a utility of $6-2\epsilon$. One can observe that the price vector $\pp' = (4.5, 4.5)$ and the corresponding subsidy vector $\s'(\xx,\pp') = (0.5, 0.5)$ enforces the optimal assignment as an equilibrium using half the amount of subsidies.
\end{example}

We now formally prove that no polynomial time algorithm can approximate the minimum amount of subsidies within a constant factor.

\begin{theorem}\label{thm:subsidies-apx}
Approximating {\sc MinSubsidies} within any constant is NP-hard.
\end{theorem}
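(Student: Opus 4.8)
**The plan is to prove NP-hardness of {\sc MinSubsidies} via a gap-introducing reduction, most naturally building on the X$3$C reduction already used for {\sc PriceCompetition}.** The key conceptual hook, stated in the paragraph preceding the theorem, is that the hardness lives in instances without equilibria: whenever an equilibrium exists it can be detected and the minimum subsidy is zero (up to the trivial cost forced at equilibrium), so {\sc CandidatePrice} solves the problem. I would therefore engineer an instance where a YES instance of the combinatorial problem forces the required subsidy to be essentially zero (or some small value $L$), while a NO instance forces the subsidy to be at least $C \cdot L$ for an arbitrarily large constant $C$, so that any constant-factor approximation would distinguish the two cases and thereby decide the NP-hard source problem.

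First I would adapt the X$3$C gadget so that the optimal assignment $\xx$ is pinned down and its consistency requirements create a gap. The idea is that, as in the {\sc PriceCompetition} reduction, each uncovered element forces its two private vendors $\nu_{e,1},\nu_{e,2}$ into the destabilizing two-vendor configuration from Lemma~\ref{lem:existence2}. In that configuration no price vector is an equilibrium, so enforcing the optimal assignment as an equilibrium demands strictly positive subsidies at those vendors; by amplifying the valuations in the element gadget (scaling by a parameter that I will call the \emph{blow-up factor}) I can make the per-uncovered-element subsidy as large as I like relative to the baseline. Conversely, when X$3$C admits an exact cover, every element is absorbed by a set-vendor, all private-vendor gadgets settle into their two-vendor equilibrium (exactly as verified in the earlier reduction), and the optimal assignment coincides with a genuine equilibrium, so the subsidy cost collapses to a small constant baseline.

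The main technical content, and the step I expect to be the obstacle, is establishing the lower bound on subsidies in the NO case: I must show that in \emph{every} price vector $\pp$ consistent with the optimal assignment $\xx$, at least one uncovered element gadget incurs a subsidy bounded below by a fixed amount, and that these contributions sum to something growing with the blow-up factor. This requires a case analysis on the relative prices $p_{\nu_{e,1}},p_{\nu_{e,2}}$ analogous to the six-case analysis in the previous subsidy theorem, using the formula $s_j(\xx,\pp)=\theta_j(\xx,\pp)-u_j(\xx,\pp)$ and the fact that in the circular dynamics one vendor can always profitably undercut. The set-vendor side needs care too: I must ensure that the set-vendor subsidies (which are present even in the YES case) remain a fixed baseline independent of the gap, so that the ratio of NO-case to YES-case total cost is dominated by the amplified element contributions.

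Finally I would assemble the gap formally: choosing the blow-up factor large enough that the YES-case cost is at most $L$ while the NO-case cost is at least $C\cdot L$ for any prescribed constant $C$, I conclude that a polynomial-time $C$-approximation to {\sc MinSubsidies} would decide X$3$C, contradicting $\mathrm{P}=\mathrm{NP}$. One subtlety to handle explicitly is that {\sc MinSubsidies} is defined relative to a \emph{given} optimal assignment, so I must confirm that the optimal assignment of my constructed game is unique (or that all optimal assignments share the gap structure), ensuring the input $\xx$ genuinely exhibits the claimed behaviour regardless of tie-breaking.
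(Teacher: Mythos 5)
Your high-level strategy (a gap-amplifying reduction from X$3$C so that any constant-factor approximation would decide the source problem) already differs from the paper's, which instead gives an approximation-preserving reduction from {\sc Node Cover} in $k$-uniform hypergraphs (NP-hard to approximate within $k-1-\epsilon$) and shows the minimum subsidy cost \emph{equals} the minimum node cover size, so that any-constant hardness follows by taking $k$ large. But more importantly, your proposal has a genuine gap at its central mechanism, and it is exactly the step you flagged as the expected obstacle. {\sc MinSubsidies} is defined relative to the \emph{optimal} assignment $\hat\xx$, and in the X$3$C construction every element buyer $b_e$ is assigned in $\hat\xx$ to a set vendor $\nu_S$ (valuation $12$ beats the gadget valuations $5$ and $3$), never to $\nu_{e,1}$ or $\nu_{e,2}$. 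Moreover, consistency of $\hat\xx$ forces $p_{\nu_S}\leq 6$ for \emph{every} set vendor, since the buyers $b_{S,\ell}$ are assigned rather than abstaining; hence under \emph{any} consistent price vector each element buyer enjoys utility at least $6>5$ at its set vendor, the private vendors can never attract $b_e$ by any deviation, and the circular dynamics of Lemma~\ref{lem:existence2} never activate. The pair $\nu_{e,1},\nu_{e,2}$ can then be priced at $(0,2)$ with zero subsidy in the NO case just as in the YES case. In fact the minimum subsidy cost is exactly $18(m-q)$ whether or not an exact cover exists: with all set vendor prices at $6$, a vendor $\nu_S$ carrying $r_S$ element buyers has utility $(12+r_S)\cdot 6$ against a deviation benchmark of $90$, so its (pointwise minimal) subsidy is $18-6r_S$, and summing over all $S$ gives $18m-6\cdot 3q$ no matter how the element buyers are distributed among set vendors. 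There is no gap at all.

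Amplifying the element-gadget valuations does not repair this. If you scale them so that $b_e$'s best value moves into the gadget (e.g., $5B>12$), then the optimal assignment places $b_e$ inside the gadget in \emph{both} YES and NO instances, and the resulting instability cost (of order $B$) appears identically on both sides --- again no gap. The root difficulty is that the optimal assignment and its consistency constraints are determined by valuations and costs alone and do not encode disjointness or exactness of a cover; the subsidy deficit degrades smoothly rather than jumping at coverability. The paper's construction circumvents precisely this: it builds a game in which, under any consistent price vector, each node vendor effectively faces a binary choice (price $k+2$ at subsidy cost $1$ versus price $k+3$ at cost $0$), while an edge vendor's subsidy vanishes only if some incident node vendor chose the low price. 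The cheapest consistent price vector thus costs exactly the minimum node cover of the hypergraph, and the known $(k-1-\epsilon)$-inapproximability of that problem transfers loss-free.
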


\begin{proof}
We will use an approximation-preserving reduction from the {\sc Node Cover} problem in $k$-uniform hypergraphs (i.e., hypergraphs in which every edge consists of $k\geq 2$ nodes), which is formally described as follows.
\begin{quote}
{\sc Node Cover}: Given a $k$-uniform hypergraph $G$, compute a node subset $C$ of minimum size so that every (hyper)edge $e$ has at least one of its nodes in $C$.
\end{quote}
The quantity $k$ in the definition of {\sc Node Cover} is a constant. It is known that, for every constant $\epsilon>0$, approximating {\sc Node Cover} within $k-1-\epsilon$ is NP-hard \cite{DGKR05}. Given a $k$-uniform hypergraph $G$, we construct the following price competition game; see also Table~\ref{tab:sub-hardness-instance}:
\begin{itemize}
\item for every edge $e$ of $G$, there is an {\em edge} vendor $e$ and a buyer type $b_e$ with volume $1$ and valuation $(k+1)^2$ for vendor $e$;
\item for every node $j$ of $G$, there are: one {\em node} vendor $j$, one {\em auxiliary} vendor $j^*$, one buyer type $b_j$ with volume $1$ and valuations $k+3$ for vendor $j$ and $k+2$ for every vendor $e$ such that $j \in e$, and a buyer type $b_j^*$ with volume $1/(k+2)$ and valuations $k+2$ for vendor $j$ and $k+3$ for vendor $j^*$.
\item all valuations not mentioned above as well as all production costs are zero.
\end{itemize}

\begin{table}[htbp]
\centerline{
\begin{tabular}{l|c c c}
  & vn $e$& vn $j$& vn $j^*$\\[-0.5em]
  & $c_e = 0$ & $c_j=0$ & $c_{j^*}=0$ \\\hline
  bt $b_e$: $\mu_{b_e} = 1$ & $(k+1)^2$ & 0 & 0 \\
  bt $b_j$: $\mu_{b_j} = 1$ & $k+2$  & $k+3$ & 0 \\
  bt $b_j^*$: $\mu_{b_j^*} = 1/(k+2)$ & 0 & $k+2$ & $k+3$ \\
\end{tabular}}
  \caption{Part of the instance in the reduction from {\sc Node Cover}. For each edge $e$, there exists an edge vendor $e$ and a buyer type $b_e$, while for each node $j$, there are two buyer types, $b_j$ and $b_j^*$, and two vendors, $j$ and $j^*$.}
\label{tab:sub-hardness-instance}
\end{table}

In the optimal assignment $\hat\xx$, for every node $j$ of $G$, buyers of type $b_j$ are assigned to vendor $j$, and buyers of type $b_j^*$ are assigned to vendor $j^*$ and, for every edge $e$ of $G$, buyers of type $b_e$ are assigned to vendor $e$. We will show that the minimum amount of subsidies required to enforce this optimal assignment as an equilibrium is {\em equal} to the size of a minimum node cover of $G$. We will need the following lemma.

\begin{lemma}\label{lem:subsidies}
Let $\pp$ be a price vector to which assignment $\hat\xx$ is consistent. Then, for every auxiliary vendor $j^*$, edge vendor $e$, and node vendor $j$,
\begin{eqnarray}\label{eq:subfj}
s_{j^*}(\hat\xx,\pp) &=& \frac{1}{k+2}\min\{k+3,p_j+1\}-\frac{1}{k+2}p_{j^*},\\\label{eq:sube}
s_e(\hat\xx,\pp) &=& \max\{(k+1)^2,(k+1)\min_{j_i\in e}\{p_{j_i}-1\}\} - p_e,\\\nonumber
s_j(\hat\xx,\pp)&=& \max\left\{\frac{k+3}{k+2}\min\{p_{j^*}-1,1+\min\{k+2,\min_{e: j \in e}{p_e}\}\},\right.\\\label{eq:subj}
& &\quad\quad\left.\gamma_j\max\{p_{j^*}-1,1+\min\{k+2,\min_{e: j \in e}{p_e}\}\}\right\} - p_j,
\end{eqnarray}
where\footnote{The quantity $\one{X}$ equals $1$ when the event $X$ is true and is $0$ otherwise.} $\gamma_j = 1-\frac{k+1}{k+2}\one{p_{j^*}-1 \geq 1+\min\{k+2,\min_{e: j \in e}{p_e}\}}$.
\end{lemma}

\begin{proof}
First, observe that the buyers of type $b_j^*$ are still assigned to vendor $j^*$ when $j^*$ deviates to any price $p'_{j^*}$ that is at most $k+3$ (the valuation of $b_j^*$ for vendor $j^*$) and such that the assignment $\hat\xx$ is still consistent to the new price vector $(p'_{j^*},\pp_{-j^*})$, i.e., $k+3-p'_{j^*} \geq k+2-p_j$. Hence, $s_{j^*}(\hat\xx,\pp)$ is equal to the maximum (over all price deviations $p'_{j^*}$) utility of the vendor minus the vendor's utility at price $p_{j^*}$; this is expressed by (\ref{eq:subfj}).

A vendor $e$ corresponding to an edge $e=(j_1,..., j_k)$ of $G$ can attract the buyers of type $b_e$ by deviating to a price $p'_e$ up to $(k+1)^2$ for a utility of $p'_e$ (up to $(k+1)^2$). Also, vendor $e$ can attract the buyers of type $b_{j_i}$ by deviating to a price $p'_e$ such that the utility of buyers of type $b_{j_i}$ when assigned to vendor $e$ is not lower than their utility when assigned to vendor $j_i$, i.e., $k+2-p'_e \geq k+3-p_{j_i}$ or $p'_e\leq p_{j_i}-1$. In such a case, the vendor will also attract the buyers of type $b_e$ as well. When the deviation to a price $p'_e$ attracts the buyers of $b_e$ and node buyers corresponding to a subset $\overline{e}$ of at most $k-1$ nodes of $e$, the utility will be up to $k\min_{j_i\in \overline{e}}\{p_{j_i}-1\}$. This is never more than $k(k+2)$ since $p_{j_i}$ for $j_i\in \overline{e}$ must be up to $k+3$ (the valuation of buyers of type $b_{j_i}$ for vendor $j_i$) so that $\hat\xx$ is consistent to the price vector $\pp$. Finally, by deviating to any price $p'_e$ up to $\min_{j_i\in e}\{p_{j_i}-1\}$, the utility will be $(k+1)p'_e$, i.e., up to $(k+1)\min_{j_i\in e}\{p_{j_i}-1\}$. Again, $s_e(\hat\xx,\pp)$ is equal to the maximum (over all price deviations $p'_{e}$) utility of the vendor minus the vendor's utility at price $p_{e}$, which is expressed by (\ref{eq:sube}).

A node vendor $j$ can attract buyers of type $b_j$ by deviating to a price $p'_j$ up to $k+3$ (the valuation of buyers of type $b_j$ for vendor $j$) such that $k+3-p'_j\geq k+2-p_e$ for all edge vendors $e$ corresponding to edges incident to node $j$ in $G$. Hence, this price can be up to $1+\min\{k+2,\min_{e: j \in e}{p_e}\}$. Also, vendor $j$ can attract buyers of type $b_j^*$ (from vendor $j^*$) by deviating to a price $p'_j$ that satisfies $k+2-p'_j\geq k+3-p_{j^*}$; this price can be up to $p_{j^*}-1$. Hence, the maximum utility of vendor $j$ is achieved when deviating to one of the two prices. The volume of buyers attracted is $(k+3)/(k+2)$ when deviating to the lowest of the two values and either $1$ (the volume of buyers of type $b_j$) or $1/(k+2)$ (the volume of buyers of type $b_j^*$) when deviating to the highest among the two values. The quantity $\gamma_j$ in the statement of the lemma denotes the volume of buyers attracted by vendor $j$ in the latter case. Then, $s_j(\hat\xx,\pp)$ is given by the difference between the maximum utility at a deviation and the current vendor's utility in (\ref{eq:subj}).
\end{proof}

Given a node cover $C$ of $G$, we construct the price vector $\pp$ with $p_{j^*}=k+3$ for every auxiliary vendor $j^*$, $p_e=(k+1)^2$ for every edge vendor $e$, $p_j=k+2$ for every node vendor $j$ corresponding to a node $j\in C$, and $p_j=k+3$ for every other node vendor $j\not\in C$. It is easy to see that the optimal assignment is consistent to $\pp$. By Lemma \ref{lem:subsidies}, we have that $s_{j^*}(\hat\xx,\pp)=0$ for every auxiliary vendor $j^*$, $s_e(\hat\xx,\pp)=0$ for every edge vendor $e$, $s_j(\hat\xx,\pp)=1$ for every node vendor $j\in C$ and $s_j(\hat\xx,\pp)=0$, otherwise.  Obviously the cost of $\s(\hat\xx,\pp)$ is $|C|$.

We will now show that for any price vector $\pp$ to which the optimal assignment $\hat\xx$ is consistent, there is a node cover of $G$ of size at most the cost of $\s(\hat\xx,\pp)$. In order to show this, we will start with an arbitrary price vector $\pp$ to which $\hat\xx$ is consistent and will transform it, through a sequence of rounds that will not violate the consistency of $\hat\xx$ to prices and will not increase the cost of the corresponding subsidies, to a price vector with the following properties: all edge vendors have prices equal to $(k+1)^2$, auxiliary node vendors have prices equal to $k+3$, node vendors corresponding to a node cover $C$ of $G$ have prices equal to $k+2$ while the remaining node vendors have prices equal to $k+3$. The corresponding subsidies are $1$ in every node vendor corresponding to a node of $C$ and $0$ elsewhere. We describe these rounds in the following. An example of this process is presented in Figure \ref{fig:hardness} and Table \ref{tab:example_hardness}.

\begin{figure}[htbp]
\centering
\includegraphics[trim = 0 0 0 0, width = 1.8in]{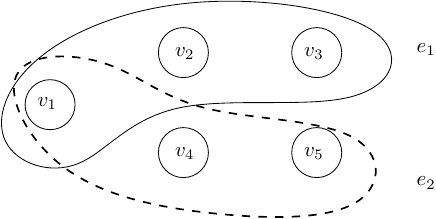}
\caption{A $3$-uniform hypergraph with five nodes and two (hyper)edges.} \label{fig:hardness}
\end{figure}

First, we consider the edge vendors one by one and increase the price of each of them to $(k+1)^2$. Clearly, the increase of the price of any edge vendor can only cause a decrease in the subsidy at the edge vendor (from equation (\ref{eq:sube})). Also, the increase of the price of any edge vendor that, initially, has value at least $k+2$ does not increase the subsidy in any auxiliary or node vendor (from equations (\ref{eq:subfj}) and (\ref{eq:subj})). Now consider a vendor corresponding to the edge $e=(j_1,..., j_k)$ that initially has a price $p_e$ of $k+2-\delta$ for some $\delta \in (0,k+2)$. Increasing this price to $(k+1)^2$ causes a decrease of $k^2+k-1+\delta$ in the subsidy of the edge vendor $e$. If $e$ does not have the minimum price among the edge vendors corresponding to edges containing node $j_i$ for every $j_i\in e$, the price increase does not affect the subsidies in node vendors $j_i \in e$. Otherwise, by inspecting the right-hand side of (\ref{eq:subj}), we have that the initial subsidies in $j_i\in e$ are at least $k+3-\delta-p_{j_i}$. Clearly, after the edge vendor price increase and since $p_j^*\leq k+3$, the subsidies in node vendor $j_i$ are at most $k+3-p_{j_i}$; the subsidies in any other vendor are not affected. Overall, we have a decrease in the total amount of subsidies by at least $k^2+k-1+\delta-k\delta$ which is strictly positive since $\delta\leq k+2$.

\begin{table}[htbp]
\centerline{
\begin{tabular}{l|c c | c c c c c | c c c c c}
  & $e_1$& $e_2$& $j_1$& $j_2$& $j_3$& $j_4$& $j_5$& $j^*_1$& $j^*_2$& $j^*_3$& $j^*_4$& $j^*_5$\\\hline
  $\pp_0$ & $5$ & $4.5$ & $5.5$ & $5.1$ & $5.1$ & $5.5$ & $4.5$ & $6$ & $5.5$ & $5$ & $6$ & $5$ \\
  $\s_0$ & $11.4$ & $11.5$ & $0.5$ & $0.9$ & $0.9$ & $0.5$ & $1$ & $0$ & $0.1$ & $0.2$ & $0$ & $0.1$\\\hline\hline
  $\pp_1$ & $16$ & $16$ & $5.5$ & $5.1$ & $5.1$ & $5.5$ & $4.5$ & $6$ & $5.5$ & $5$ & $6$ & $5$ \\
  $\s_1$ & $0.4$ & $0$ & $0.5$ & $0.9$ & $0.9$ & $0.5$ & $1.5$ & $0$ & $0.1$ & $0.2$ & $0$ & $0.1$\\\hline\hline
  $\pp_2$ & $16$ & $16$ & $5.5$ & $5.1$ & $5.1$ & $5.5$ & $5$ & $6$ & $6$ & $6$ & $6$ & $6$ \\
  $\s_2$ & $0.4$ & $0$ & $0.5$ & $0.9$ & $0.9$ & $0.5$ & $1$ & $0$ & $0$ & $0$ & $0$ & $0$\\\hline\hline
  $\pp_3$ & $16$ & $16$ & $5.5$ & $5.1$ & $5$ & $5.5$ & $5$ & $6$ & $6$ & $6$ & $6$ & $6$ \\
  $\s_3$ & $0$ & $0$ & $0.5$ & $0.9$ & $1$ & $0.5$ & $1$ & $0$ & $0$ & $0$ & $0$ & $0$\\\hline\hline
  $\pp$ & $16$ & $16$ & $6$ & $6$ & $5$ & $6$ & $5$ & $6$ & $6$ & $6$ & $6$ & $6$ \\
  $\s$ & $0$ & $0$ & $0$ & $0$ & $1$ & $0$ & $1$ & $0$ & $0$ & $0$ & $0$ & $0$
\end{tabular}}
  \caption{The process of adjusting the price and subsidy vectors for the vendors defined by the hypergraph of Figure \ref{fig:hardness}. Recall that valuations are as described in Table \ref{tab:sub-hardness-instance}, using $k=3$, i.e., $v_{b_ee} = 16$ for the edge vendor and the buyer type due to edge $e$, $v_{b_je} = 5$ when $j \in e$, $v_{b_jj}=6$, $v_{b^*_jj}=5$ and $v_{b^*_jj^*}=6$. Let $\pp_0$ be a price vector to which the optimal assignment is consistent and $\s_0$ the corresponding subsidy vector. We present how these vectors change during each round, until we reach the final vectors $\pp$ and $\s$. We remark that these final vectors correspond to selecting nodes $v_3$ and $v_5$ as the node cover of the hypergraph of Figure \ref{fig:hardness}.}\label{tab:example_hardness}
  \end{table}

After the above round, each edge vendor has a price of $(k+1)^2$ and, since $p_{j^*}\leq k+3$ for any vendor $j^*$, the subsidy of node vendor $j$ is exactly $k+3-p_j$. In a second round, we consider all node vendors with prices strictly smaller than $k+2$ and increase them to $k+2$ and all auxiliary vendors with prices strictly smaller than $k+3$ and increase them to $k+3$. Observe that (from (\ref{eq:subfj})) the subsidy of an auxiliary vendor is $0$ after this round (i.e., it does not increase), the subsidies at edge vendors are not affected (since no node vendor price increases above $k+2$; see (\ref{eq:sube})), and the subsidies in node vendors (which are now given by the expression $k+3-p_j$) can only decrease.

So, after the first two rounds, all auxiliary vendor prices are equal to $k+3$, all edge vendor prices are equal to $(k+1)^2$, and the node vendor prices are in $[k+2,k+3]$. In the third round, we consider all edges of $G$ one by one; for each edge $e=(j_1,..., j_k)$, we set the minimum price among $j_1$, $j_2$, ..., $j_k$ to $k+2$. Let $j'$ be this vendor; assuming that the price of $j'$ is $k+2+\delta$ when it is considered (with $\delta\in [0,1]$), this causes a decrease of $(k+1)\delta$ in the subsidy of the edge vendor $e$ (as well as possible decreases in the subsidies of other edge vendors corresponding to other edges containing node $j'$) and an increase of $\delta$ in the subsidy of the node vendor $j'$, i.e., no overall increase. During the last round, all node vendor prices that are strictly higher than $k+2$ are set to $k+3$. Increasing the price of a node vendor $j$ can only decrease the subsidies in $j$, does not affect the subsidies in $j^*$, and does not affect the subsidy of any edge vendor corresponding to an edge $e$ containing node $j$, since there is a node $j'\in e$ so that the price of the node vendor $j'$ has already been set to $k+2$, implying an edge vendor subsidy of zero (this follows from (\ref{eq:sube})). Again, this last round does not increase the total amount of subsidies. Observe that the nodes corresponding to node vendors with price equal to $k+2$ form a node cover in $G$. The subsidies in each of these node vendors is exactly $1$ while no subsidies correspond to the remaining vendors.
\end{proof}

\section{Discussion and open problems}\label{sec:open}
In this work, we have posed and answered a long list of questions about price competition games among single-product vendors. Of course, our work reveals a lot more open problems. We mention a few here. First, we remark that identifying broad classes of games that always admit equilibria is very interesting. It is not hard to see that games with {\em single-minded} buyers, i.e., where each buyer only wishes to be assigned to a single vendor, always have pure equilibria, as essentially each vendor sets the price that maximizes its utility and the buyers either abstain or choose their preferred vendor; note that, in this setting, there is no price competition among vendors. Another class of games that always admits pure equilibria is the following extension of the case in Lemma \ref{lem:existence}: let there be $n$ buyer types, where $v_{ij} = v_{1j}+c_i$, for $2\leq i\leq n$ where $c_i\geq 0$ is a constant. Note that, in this setting, all buyer types agree on the same ranking of vendors and that, for fixed $i$, $v_{ij}-v_{ij'}$ is constant for any pair of vendors $j,j'$; we can adapt the proof of Lemma \ref{lem:existence} to prove this claim. However, one can show that the case, where all buyer types agree on the same ranking of vendors but $v_{ij} - v_{ij'}$ may vary for different pairs $j$, $j'$, does not always admit equilibria. To see that, consider a game with two vendors of production cost $0$ and two buyer types of volume $1$. Let $v_{11} = 6$, $v_{12} = 4$, $v_{21} = 5$ and $v_{22}=0$, where again $v_{ij}$ denotes the valuation of a buyer of type $i$ for vendor $j$; it is not hard to show that this game does not admit equilibria.

From the algorithmic point of view, observe that we have made no particular attempt to optimize the running time of our algorithms. We believe that there is much room for improvement on the running time of \texttt{CandidatePrice} and \texttt{Enumerate}. In particular, it would be interesting to come up with FPT algorithms (see Downey and Fellows~\cite{DF99}) for {\sc PriceCompetition} with respect to different parameters. Second, in spite of our inapproximability result (Theorem \ref{thm:subsidies-apx}), we believe that it is important to design polynomial-time approximation algorithms for {\sc MinSubsidies}. For example, is there a logarithmic approximation algorithm? What about additive approximations using an amount of subsidies that exceeds the minimum by at most $\rho \cdot\SW^*$ for some small $\rho>0$?

Another set of open problems comes from introducing constraints to price competition games such as supply limitations. For example, consider additional input parameters that indicate the maximum volume of buyers each vendor can support. We believe that this subtle difference in the definition makes the setting even richer from the computational point of view. Another question concerns mixed equilibria. Do such equilibria always exist? Observe that the strategy spaces of vendors have infinite size in this case. Can they be computed efficiently? What is their price of anarchy? What about generalizations of our model that include uncertainty for buyer valuations? It is our firm belief that these questions deserve investigation.

An alternative model for using subsidies is to reimburse the vendors so that a given buyers-to-vendors assignment $\xx$ results in an equilibrium with the vendors figuring out the corresponding price vector $\pp$ on their own. Instead, our model reimburses the vendors so that a given pair $(\xx,\pp)$ is an equilibrium. Apart from the requirement that the vendors compute collectively an appropriate price vector, we remark that there are cases where this variant leads to higher subsidies; the price competition game in Example \ref{ex:lemma4} is such a case.

Finally, it makes sense to consider vendors with concave production costs (that model economies of scale), as opposed to fixed production costs per unit that we consider in the paper. Clearly, all our negative results, such as the existence of price competition games with no pure Nash equilibria and the NP-completeness results, still hold but we do not know if the positive results carry over to this more general setting. We remark that this modification also affects the consequences after a price change. For example, a vendor may decrease its price in order to attract more buyers and, after these buyers select that vendor, the vendor can further reduce its price as now its production cost per item has decreased. These model variants deserve investigation as well.

\bibliography{VendorsBuyers}

\begin{thebibliography}{10}
\expandafter\ifx\csname url\endcsname\relax
  \def\url#1{\texttt{#1}}\fi
\expandafter\ifx\csname urlprefix\endcsname\relax\def\urlprefix{URL }\fi

\bibitem{AS15}
E.~Anshelevich, S.~Sekar, Price competition in networked markets: {H}ow do
  monopolies impact social welfare?, in: Proceedings of the 11th International
  Conference on Web and Internet Economics (WINE), 2015, pp. 16--30.

\bibitem{ACFK12}
J.~Augustine, I.~Caragiannis, A.~Fanelli, C.~Kalaitzis, Enforcing efficient
  equilibria in network design games via subsidies, Algorithmica 72~(1) (2015)
  44--82.

\bibitem{BLN13}
M.~Babaioff, B.~Lucier, N.~Nisan, Bertrand networks, in: Proceedings of the
  14th ACM Conference on Electronic Commerce (EC), 2013, pp. 33--34.

\bibitem{BNPL14}
M.~Babaioff, N.~Nisan, R.~{Paes Leme}, Price competition in online
  combinatorial markets, in: Proceedings of the 23rd International Conference
  on World Wide Web (WWW), 2014, pp. 711--722.

\bibitem{BEM+09}
Y.~Bachrach, E.~Elkind, R.~Meir, D.~Pasechnik, M.~Zuckerman, J.~Rothe, J.~S.
  Rosenschein, The cost of stability in coalitional games, in: Proceedings of
  the 2nd International Symposium on Algorithmic Game Theory (SAGT), 2009, pp.
  122--134.

\bibitem{BLS16}
A.~Borodin, O.~Lev, T.~Strangway, Budgetary effects on pricing equilibrium in
  online markets, in: Proceedings of the 15th International Conference on
  Autonomous Agents and Multiagent Systems (AAMAS), 2016, pp. 95--103.

\bibitem{BLNO10}
N.~Buchbinder, L.~Lewin-Eytan, J.~Naor, A.~Orda, Non-cooperative cost sharing
  games via subsidies, Theory of Computing Systems 47~(1) (2010) 15--37.

\bibitem{CR08}
S.~Chawla, T.~Roughgarden, Bertrand competition in networks, in: Proceedings of
  the 1st International Symposium on Algorithmic Game Theory (SAGT), 2008, pp.
  70--82.

\bibitem{CDR06}
R.~Cole, Y.~Dodis, T.~Roughgarden, How much can taxes help selfish routing?,
  Journal of Computer and System Sciences 72~(3) (2006) 444--467.

\bibitem{DGKR05}
I.~Dinur, V.~Guruswami, S.~Khot, O.~Regev, A new multilayered {PCP} and the
  hardness of hypergraph vertex cover, SIAM Journal on Computing 34~(5) (2005)
  1129--1146.

\bibitem{DF99}
R.~G. Downey, M.~R. Fellows, Parameterized Complexity, Springer-Verlag, 1999.

\bibitem{GS99}
F.~Gul, E.~Stacchetti, Walrasian equilibria with gross substitutes, Journal of
  Economic Theory 87~(1) (1999) 95--124.

\bibitem{GHK+05}
V.~Guruswami, J.~D. Hartline, A.~Karlin, D.~Kempe, C.~Kenyon, F.~McSherry, On
  profit-maximizing envy-free pricing, in: Proceedings of the 16th Annual
  ACM-SIAM Symposium on Discrete Algorithms (SODA), 2005, pp. 1164--1173.

\bibitem{KC82}
A.~Kelso, P.~Crawford, Job matching, coalition formation and gross substitutes,
  Econometrica 50~(6) (1982) 1483--1504.

\bibitem{KP09}
E.~Koutsoupias, C.~H. Papadimitriou, Worst-case equilibria, Computer Science
  Review 3~(2) (2009) 65--69.

\bibitem{LOBR15}
O.~Lev, J.~Oren, C.~Boutilier, J.~S. Rosenschein, The pricing war continues: on
  competitive multi-item pricing, in: Proceedings of the 29th AAAI Conference
  on Artificial Intelligence (AAAI), 2015, pp. 972--978.

\bibitem{MWG95}
A.~Mas-Colell, M.~D. Whinston, J.~R. Green, Microeconomic Theory, Oxford
  University Press, 1995.

\bibitem{MLTB14}
R.~Meir, T.~Lu, M.~Tennenholtz, C.~Boutilier, On the value of using group
  discounts under price competition, Artificial Intelligence 216 (2014)
  163--178.

\bibitem{MT04}
D.~Monderer, M.~Tennenholtz, K-implementation, Journal of Artificial
  Intelligence Research 21 (2004) 37--–62.

\bibitem{N07}
N.~Nisan, Introduction to mechanism design (for computer scientists), in:
  Algorithmic Game Theory, chap.~9, Cambridge University Press, 2007, pp.
  209--241.

\bibitem{P01}
C.~H. Papadimitriou, Algorithms, games, and the {I}nternet, in: Proceedings of
  the 33rd Annual {ACM} Symposium on Theory of Computing (STOC), 2001, pp.
  749--753.

\bibitem{SK05}
D.~Sarne, S.~Kraus, Cooperative exploration in the electronic marketplace, in:
  Proceedings of the 20th AAAI Conference on Artificial Intelligence (AAAI),
  2005, pp. 158--163.

\bibitem{WEKL13}
M.~Wooldridge, U.~Endriss, S.~Kraus, J.~Lang, Incentive engineering for boolean
  games, Artificial Intelligence 195 (2013) 418--439.

\end{thebibliography}
\end{document}